\begin{document}
	
\title{Unit Perturbations in Budgeted Spanning Tree Problems
}

\author{Hassene Aissi\inst{1}           \and 
        Solal Attias\inst{2} \and
        Da Qi Chen\inst{3} \and
        R. Ravi\inst{4}\thanks{This material is based upon work supported in part by the U. S. Office of Naval Research under award number N00014-18-1-2099 and the Air Force Office of Scientific Research under award number FA9550-20-1-0080}
}

\institute{Paris Dauphine University \\
              \email{aissi@lamsade.dauphine.edu}           
           \and
           {Ecole Normale Superieure\\
               \email{solal.attias@ens.fr}}
           \and
           {Biocomplexity Institute and Initiatives at University of Virginia \\
              \email{wny7gj@virginia.eduu}} \\
           \and
           {Tepper School of Business, Carnegie Mellon University \\
           	\email{ravi@cmu.edu}} \\
}


\maketitle

\abstract{The minimum spanning tree of a graph is a well-studied structure that is the basis of countless graph theoretic and optimization problem. We study the minimum spanning tree (MST) perturbation problem where the goal is to spend a fixed budget to increase the weight of edges in order to increase the weight of the MST as much as possible. Two popular models of perturbation are bulk and continuous. In the bulk model, the weight of any edge can be increased exactly once to some predetermined weight. In the continuous model, one can pay a fractional amount of cost to increase the weight of any edge by a proportional amount.  Frederickson and Solis-Oba \cite{FS96} have studied these two models and showed that bulk perturbation for MST is as hard as the $k$-cut problem while the continuous perturbation model is solvable in poly-time.

In this paper, we study an intermediate unit perturbation variation of this problem where the weight of each edge can be increased many times but at an integral unit amount every time. We provide an $(opt/2 -1)$-approximation in polynomial time where $opt$ is the optimal increase in the weight. We also study the associated dual targeted version of the problem where the goal is to increase the weight of the MST  by a target amount while minimizing the cost of perturbation. We provide a $2$-approximation for this variation. Furthermore we show that assuming the Small Set Expansion Hypothesis, both problems are hard to approximate.

We also point out an error in the proof provided by Frederickson and Solis-Oba in \cite{FS96} with regard to their solution to the continuous perturbation model. Although their algorithm is correct, their analysis  is flawed. We provide a correct proof here.
}

\clearpage

\section{Introduction}
Classic problems in network optimization involve minimizing (or maximizing) the cost of interesting combinatorial objects such as cuts, paths or trees. 
The motivation for Ford and Fulkerson's study of the minimum $st$-cut problem~\cite{harris55,schrijver2002history,chestnut2017hardness} 
was to examine how much an enemy can interdict the network with a limited budget and hence reduce the capacity of the minimum cut to the lowest possible value under different costs for capacity reduction at the arcs. 
The minimum $st$-cut problem is posed in a directed graph with non-negative arc capacities and special source $s$ and sink $t$ with the goal of finding the minimum capacity of a cut separating all paths from $s$ to $t$.
In early work, Fulkerson~\cite{FulkersonMS} also introduced a budgeted version of this minimum $st$-cut problem. 
Here in addition to the capacities, each arc is also provided with a cost to increase its capacity by one unit and a global budget $B > 0$ on the total cost that can be spent with the goal of maximizing the minimum cut after carrying out the increases within budget. 
Ahuja and Orlin~\cite{AO95} later showed how this algorithm can be made to run in strongly polynomial time.

Fulkerson and Harding~\cite{FH77} studied the analogue of this problem for the case of minimum-length $st$-paths.
In the shortest $st$-path problem in a directed graph with non-negative arc lengths and special source $s$ and destination $t$, the goal is to find the minimum length path from $s$ to $t$.
The budgeted version addressed in~\cite{FH77} specifies an integral cost per unit increase of the length for each arc and a budget $b$ and seeks to maximize the shortest $st$-path under this constraint. They reduce this problem to an instance of parametric minimum-cost flow which is also known to be solvable in polynomial time~\cite{tardos85}.

The study of budgeted problems for the case of minimum weight spanning trees in undirected graphs was initiated by Frederickson and Solis-Oba~\cite{FS99}.
The minimum spanning tree problem in a given undirected graph with non-negative edge weights is to find a spanning tree of minimum total edge weight. 
The budgeted version specifies costs per unit increase in the weight of the edges and a budget $B>0$ and requires spending the budget optimally to increase the edge-weights to maximize the weight of the resulting minimum-weight spanning tree.
Frederickson and Solis-Oba gave a strongly polynomial algorithm for this problem. 
In follow-up work~\cite{FS98}, they extended this solution to that of maximizing the minimum weight base of matroids for which independence can be tested in polynomial time.

Juttner~\cite{juttner06} extended the study of strongly polynomial time algorithms for such budgeted optimization problems to a larger class extending the work of~\cite{FS98} to the budgeted cases of minimum-cost circulations,  matroid intersections and submodular optimization.

\subsection{Budgeted versus Target Versions}
In contrast to requiring that the total cost-spent is no more than a pre-specified budget $B$ so as to maximize the minimum-weight solution, an alternate formulation is to specify a target final value $T$ for the minimum weight solution and minimize the total cost spent in increasing the weights of the edges to achieve this target. Note that a polynomial time algorithm for one version usually yields a polynomial time algorithm for the other, by carrying out a binary search over the specified parameter and optimizing the other; however, this strategy does not directly yield strongly polynomial algorithms. We will study both the original budgeted and the dual targeted versions of the problem in this paper.

\subsection{Discrete Perturbation Models}
We remark that all the above budgeted optimization formulations allow fractional changes, i.e., if we increase the weight of an edge $e$ by any $\delta>0$, we pay $\delta \cdot c(e)$ where $c(e)$ is the cost per unit weight increase for the edge $e$. We may think of this as the {\em continuous} version of these problems. 
A natural direction to investigate is the {\em discrete} version of these problems where the changes to the elements (edges) are only allowed to be discrete units. 

Frederickson and Solis-Oba~\cite{FS99} introduce an even simpler version of the discrete version of the robust minimum spanning tree problem: for a given value $k$, find the set of $k$ edges whose removal from the graph causes the largest increase in the weight of its minimum spanning trees.
They point out that this problem already appears in the literature under the name of the most vital edges in the MST~\cite{HJLH91,HWW92,IK93,LC93,tarjan82}.
The most vital arcs problem has also been studied in the context of shortest $st$-paths~\cite{BGV89,MMG89}.

\subsection{Bulk versus Unit Perturbations}
We distinguish between two types of discrete perturbation models: bulk and unit. In the \textbf{bulk perturbation} models, each edge is specified with a bulk reduction in its weight for which a modification cost must be paid in full for any modification to be effected at all. More generally, each edge has a series of integer-valued successive modifications in its weight that can be effected by a corresponding set of modification costs, each of which must be paid in full to effect these modifications.
In contrast, the \textbf{unit perturbation} model specifies for each edge, a unit modification cost, and bounds (upper or lower, depending on whether the modification is an increase or decrease of weight respectively) and modifications can be made one unit at a time until the bound. 
In both models, in the budgeted version, we are given a budget $B$ on the total modification budget and the objective as before is to increase the cost of a minimum cost solution. Similarly, for the targeted version, we are given a target $T$ on the final value of the objective and the goal is to minimize the modification cost so as to reach the target value $T$ in the objective after these modifications. 

As a simple example, consider the targeted unit-perturbation version of the $st$-minimum cut problem in a digraph with edge capacities. Assume that we have an estimate of fortification cost $r_a$ for every arc to increase its capacity by one unit; we assume that the fortification costs are valid for any starting capacity in a given range. 
Given a target $T$, the goal is to increase the capacities on a subset of arcs at minimum fortification cost so that the minimum $st$-cut in the fortified network is at least $T$. 
We may have individual limits on how much each arc can be fortified in total. 
We insist that the total capacity increase in each arc is integral to retains the combinatorial flavor in the discrete problem. This model is ``smoother" than the bulk model where every edge has a fixed integral capacity increase (potentially greater than one) for which the full upgrading cost has to be paid: any smaller use of the budget does not upgrade the edge at all.
In contrast to the more popular interdiction problems modeling an attacker's concerted effort to weaken a network, fortification (such as infrastructure replenishment) is typically carried out over time.
The unit-upgrading is more natural in this context of incremental upgrading of elements over time.
Indeed, it is known that under the bulk model that requires us to pay a single cost to increase the capacity by an arbitrarily large amount,  the minimum $st$-cut fortification problem is as hard as the notorious densest subgraph (DkS) problem (See~\cite{chestnut2017hardness} for a reduction).
Henceforth we focus our attention in the rest of this paper on the unit perturbation models.

\subsection{Problem Formulations}
In this paper, we study both the targeted and budget version of the minimum weight spanning tree problem under unit perturbations. We use $(G, w)$ to denote the input graph $G$ with edge weights $w$. Given an upgrading scheme, let $c_x := c \cdot x = \sum_{e\in E(G)} c_ex_e$ is the total cost of this upgrading. 
\begin{problem} (Budgeted MMST)
	Given a graph $G$, functions $w, c:E(G)\to \mathbb{N}$ and a budget $B$, let $f_x$ be the weight of an MST in $(G,w+x)$. Then, find an upgrading scheme $x:E(G)\to \mathbb{N}$ that maximizes $f_x-f_0$ while ensuring the total cost, $c_x := c \cdot x$, is at most $B$. We call $f_G(B)$ this maximum value. 
\end{problem}

Note that we have modelled the increase in the weight of the MST as the target, which is more general/powerful than making the target the final weight of the MST. 

We also consider the targeted version of this problem:
\begin{problem} (Targeted MMST)
	Given a graph $G$, functions $w, c:E(G)\to \mathbb{N}$ and a target increase of $T-T_0$ where $T_0$ is the weight of the MST in $(G, w)$, find a perturbation scheme $x:E(G)\to \mathbb{N}$ that minimizes the cost $c_x := c \cdot x$ while ensuring that $f_x$ the weight of any MST in $(G,w+x)$, is at least $T$.
\end{problem}

We define two related versions of the problems we study.

\begin{problem} (Continuous Budgeted MMST)
	Given a graph $G$, functions $w, c:E(G)\to \mathbb{N}$ and a budget $B$, let let $f_x$ be the weight of an MST in $(G,w+x)$. . Then, find a upgrading scheme $x:E(G)\to \mathbb{R^+}$ that maximizes $f_x-f_0$ while ensuring the total cost, $c_x := c \cdot x$, is at most $B$. We call $f_G(B)$ this maximum value.
\end{problem}

\begin{problem} (Continuous Targeted MMST)
	Given a graph $G$, functions $w, c:E(G)\to \mathbb{N}$ and a target increase value of $T-T_0$ where $T_0$ is the weight of the MST in $(G, w)$, find a upgrading scheme $x:E(G)\to \mathbb{R^+}$ that minimizes the cost $c_x := c \cdot x$ while ensuring that $f_x$ the weight of any MST in $(G,w+x)$, is at least $T$.
\end{problem}

Frederickson and Solis-Oba study the above continuous version of the Budgeted MMST problem and provide an optimal polynomial time algorithm \cite{FS96}, where the key difference is that the upgrading amounts on the edges are allowed to be fractional rather than integral values. 

All of the above problems permit a natural generalization with individual upper bounds on the fortification or upgrading amounts on the edges.
\begin{problem} (MMSTU)
Given a graph $G$ and functions $w, c, u :E(G)\to \mathbb{N}$ and a budget $B$, find an upgrading scheme $x:E(G)\to \mathbb{N}$, with $\forall e\in E(G), x(e) \leq u(e)$, that maximizes $f_x-f_0$ while ensuring $c_x\le B$.
\end{problem}
We notice that the algorithm of Frederickson and Solis-Oba~\cite{FS96} is optimal even with upper bounds, since the proof is not changed with upper bounds.

\subsection{Contributions}

\begin{enumerate}
    \item Frederickson and Solis-Oba~\cite{FS96} prove that for the budgeted version of Continuous MMST an optimal solution can be found in polynomial time. However, there is an error in the proof of the optimality of their algorithm. We point out the mistake and provide a correct proof. (Section~\ref{sec:correct})
    \item We provide a 2-approximation for the targeted version of the unit-perturbation MMST. For the budgeted version, we provide a $\frac{opt}{2}-1$-solution, an algorithm that gives an upgrading scheme where the increase in the weight of the MST is at least $\frac{opt}{2}-1$, where $opt$ is the optimal increase in the weight of the MST. (Section~\ref{sec:approx})
    \item 
    We show that both versions of MMST are NP-complete even with unit upgrading costs. We use reductions from the minimum-weight $k$-cut problem where the goal is to find a minimum weight set of edges to break the graph into at least $k$ connected components, and from the complementary Max Components problem of using a given budget to delete edges (of total weight within this budget) so as to maximize the number of resulting connected components.
    By using appropriate approximation-preserving reductions and the known hardness results for these two problems, we get the following implications that show that our approximation algorithms are (nearly) best possible: Assuming the Small Set Expansion Hypothesis (SSEH)~\cite{Manu18}, Targeted Discrete MMST is NP-hard to approximate to a $(2 - \varepsilon)$-factor for $\varepsilon > 0$. Similarly, assuming SSEH, Budgeted Discrete MMST is NP-hard to approximate to a $(\frac{1}{2} + \varepsilon)$-factor for $\varepsilon > 0$. (Section~\ref{sec:hardness})
    \item The complexity of MMST with unit costs for upgrading any edge and the same starting weights for all edges in the graph is unresolved. Hence we study this special case of the Targeted MMST on an undirected graph with $n$ nodes where all edges have the same initial starting weight and MST weight target $T$, and give an optimal algorithm that runs in $n^{O(T)}$ time. For this, we use a supermodularity property of a coverage function used in the analysis. For the analogous budgeted version with budget $B$, this also implies an optimal algorithm running in time $n^{O(B)}$. (Section~\ref{sec:samestart})
\end{enumerate}

\subsection{Related Work}

Somewhat tangentially related to our work is the attacker's problem of interdicting the network to {\em decrease} the
weight of the minimum weight structure in the network.
Such interdiction problems have a rich and long history in Combinatorial Optimization including serving as the motivation for Ford and Fulkerson's study of the minimum $st$-cut problem~\cite{harris55,schrijver2002history,chestnut2017hardness}.
The goal there is typically to worsen the optimum value for a structure that an enemy is trying to build by attacking or interdicting edges or nodes of the graph. Examples of interdicted structures studied include matchings~\cite{Zen10}, minimum spanning trees~\cite{LS17,Zen15}, shortest paths~\cite{golden1978problem,israeli2002shortest}, $st$-flows~\cite{Philips93,Wood93,Zen10DAM} and global minimum cuts~\cite{Zen14}. ~\cite{chestnut2016interdicting} contains a good overview of the current literature.

Related to the problem of increasing the weight of the MST that we study, prior work~\cite{jalg99,tcs99} has addressed the problem of {\em decreasing} the total cost or bottleneck cost of the MST by paying upgrading costs on the nodes that in turn reduce the weight of the incident edges.

\section{Correctness of MST Upgrading}
\label{sec:correct}
\subsection{Introduction of the problem}

Frederickson and Solis-Oba studied the continuous version of the Budgeted MMST problem \cite{FS96}, and  prove that an optimal solution can be found in polynomial time. However, there is an error in the proof of the optimality of their algorithm. We emphasize that their algorithm is correct. In this section, we point out the mistake and provide a correct proof. 

Let us first introduce the vocabulary used in~\cite{FS96}. Given $S \subset E$,
let $c(S) = \sum_{e \in S} c(e)$.
We define $coverage(S,G)$ as the minimum number of edges that any minimum spanning tree (under the current weights $w$) of $G$ shares with $S$.
We say that a set $S$ is \textit{lifted} by $\delta$ when the weight of every edge in $S$ is increased by the same amount $\delta$. 
Let $tolerance(S, G)$ be the maximum amount that the weights of the edges in $S$ can be lifted by until $coverage(S, G)$ changes (It can be shown that the coverage will only decrease at this change).  
Define $inc\_cost(S, G) := c(S)/coverage(S, G)$, the cost per unit increment of $S$.
Roughly, the algorithm in~\cite{FS96} is greedy and chooses a set $S$ of minimum $inc\_cost$ value in the current weighted graph and lifts the weights of all the edges in it by its $tolerance$ or until it runs out of budget (whichever occurs first).

Let $\widetilde{G_{w_i}}$ be the graph obtained from $G$ by first deleting all edges of weight strictly larger than $w_i$ and then contracting all edges of weight strictly smaller than $w_i$. For an edge $e\in E(G)$, let $sm\_eq(e, G)$ be the set of edges in $G$ whose weights are at most the weight of $e$. 

We first point out the following observation. Let $T$ be an MST of $G$ and let $T_{w_i}$ be the set of edges of weight $w_i$ in $T$. Then, it is easy to check that $T_{w_i}$ forms a spanning forest in $\widetilde{G_{w_i}}$. Thus, if $S$ is a set of edges in $G$ with the same weight $w_i$, then $coverage(S,G)$ is equal to the increase in the number of components of $\widetilde{G_{w_i}}$ after deleting $S$. 

\subsection{Continuous Downgrading of MSTs}

In \cite{FS96}, Frederickson and Solis-Oba proposed the following algorithm to maximize the weight of the MSTs in the resulting graph. 
Note that when there is a choice of sets $S$ that minimize $inc\_cost$, the algorithm is free to choose any.

\begin{algorithm}[h]
\begin{algorithmic}[1]
\STATE $balance \gets B$; 
\STATE $wmst$ $\gets$ weight of a minimum spanning tree of $G$; 
\WHILE {$balance > 0$}
	\STATE Find a set $S$ that minimizes $inc\_cost(S, G)$; 
	\STATE $A \gets min\{tolerance(S,G),balance/c(S)\}$;
	\STATE Lift the weights of the edges in $S$ by $A$; 
	\STATE $balance \gets balance - A * c(S)$;
\ENDWHILE
\STATE $increase \gets$ (weight of a minimum spanning tree of $G$)$- wmst$; 
\STATE Output $increase$; 
\end{algorithmic}
 \caption{raise\_mst $(G,w,c,B)$}
\end{algorithm}

Unfortunately, the proof of optimality of the algorithm (Theorem 3.1 in \cite{FS96}) is incorrect. The main idea of their proof is to first decompose any given optimal solution into a sequence of fractional lifts (where a lift corresponds to the action of increasing a set of edges 
by the same amount). Then, at any partial budget $b<B$, one can identify $S_b^*$, the set of arcs being lifted after spending $b$ budget according to the decomposition of the optimal solution. Then, one can compare $S_b^*$ to $S_b$, the set of arcs lifted by $raise\_mst$ after spending $b$ budget. 
The error in their proof arises in their decomposition wherein their definition of $S_b^*$ is not a lift. 
We explain why this is important in the rest of the proof next.
As a last step in their proof, they show that $S_b$ performs as well as $S_b^*$ for any $b$ by using the following lemma (Lemma 3.1 in \cite{FS96}). 

\begin{lemma}[Lemma 3.1 in \cite{FS96}]
    \label{lem:sumeq}
    Let $G, G'$ be two graphs on the same vertex and edge sets but has edge-weights $w, w'$ respectively. Let $S$ be a set of edges such that for every $e\in S$, $sm\_eq(e, G')\subseteq sm\_eq(e, G)$. Then, then $coverage(S, G)\le coverage(S, G')$. 
\end{lemma}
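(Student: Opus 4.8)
The plan is to prove the contrapositive-flavored statement directly by analyzing how the hypothesis $sm\_eq(e,G') \subseteq sm\_eq(e,G)$ for all $e \in S$ constrains the minimum spanning trees of $G$ and $G'$. First I would fix a minimum spanning tree $T'$ of $G'$ that achieves $coverage(S,G') = |T' \cap S|$, i.e. a minimum spanning tree sharing as few edges as possible with $S$. The goal is then to exhibit some minimum spanning tree $T$ of $G$ with $|T \cap S| \le |T' \cap S|$, which gives $coverage(S,G) \le |T' \cap S| = coverage(S,G')$.

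The key step is to use the standard characterization of MSTs via the cycle/cut rule together with the structure of $\widetilde{G_{w_i}}$ introduced just before the lemma. Recall the observation recorded in the excerpt: if $S$ consists of edges all of a single weight $w_i$, then $coverage(S,G)$ equals the increase in the number of components of $\widetilde{G_{w_i}}$ upon deleting $S$. More generally, an MST restricted to any weight class $w_i$ is a spanning forest of $\widetilde{G_{w_i}}$, and Kruskal's algorithm tells us that which edges of a given weight an MST can use is governed entirely by the connectivity induced by all strictly-lighter edges. The hypothesis $sm\_eq(e,G') \subseteq sm\_eq(e,G)$ says precisely that, relative to each edge $e \in S$, the set of edges that are ``light enough to matter'' only shrinks when we pass from $G$ to $G'$; equivalently, for each relevant threshold the contracted graph seen by $S$ in $G$ is a contraction (a quotient) of the one seen in $G'$. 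Since deleting a fixed edge set from a contraction of a graph can only merge components that were already separate, the number of additional components created by deleting $S$ can only go down when we pass from $G'$'s picture to $G$'s picture — and this is exactly the inequality $coverage(S,G) \le coverage(S,G')$ we want.

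Concretely, I would argue weight-class by weight-class. Partition $S$ by the $G$-weights of its edges (or, more carefully, set up the argument so that the threshold behavior is handled one edge at a time via $sm\_eq$), build a common MST of $G$ by running Kruskal and, whenever there is a choice, preferring non-$S$ edges; then show that the hypothesis forces at least as much ``freedom to avoid $S$'' in $G$ as the optimal $T'$ had in $G'$. The bookkeeping that makes this rigorous is an exchange argument: given $T'$, process the edges of $S \cap T'$ and show each can be kept (it is forced in $G$ too, because the relevant cut is still tight) while every edge of $S \setminus T'$ that was avoidable in $G'$ remains avoidable in $G$ (its fundamental cycle in $T$ still contains a non-$S$ edge of no greater weight, by the $sm\_eq$ containment).

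The main obstacle I anticipate is handling edges of $S$ that have \emph{different} weights, and in particular making the threshold comparison between $G$ and $G'$ precise when the two weight functions are unrelated except through the $sm\_eq$ containment: one must be careful that $sm\_eq(e,G')\subseteq sm\_eq(e,G)$ is a statement about edge \emph{sets}, not about numerical weights, so the clean ``$\widetilde{G_{w_i}}$ of $G$ is a contraction of $\widetilde{G_{w_i}}$ of $G'$'' picture needs to be replaced by an edge-wise version. I expect the cleanest route is to reduce to the single-weight-class case by a limiting/perturbation argument (or by induction on the number of distinct weights among edges incident to the analysis), at which point the component-counting observation from the excerpt finishes the job; the rest is the routine exchange bookkeeping sketched above.
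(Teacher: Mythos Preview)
Your core approach is correct and coincides with the standard argument (the paper itself quotes this lemma from \cite{FS96} without re-proving it; the intended proof is the exchange argument you sketch). Take an MST $T'$ of $G'$ realizing $coverage(S,G')$; for each $e\in S\setminus T'$, every edge $f$ on the fundamental cycle of $e$ in $T'+e$ satisfies $w'(f)\le w'(e)$, hence $f\in sm\_eq(e,G')\subseteq sm\_eq(e,G)$, hence $w(f)\le w(e)$; so $e$ is also a max-$w$-weight edge on that cycle, and running Kruskal on $G$ with ties broken against $S\setminus T'$ (equivalently, a small perturbation) yields an MST of $G$ avoiding all of $S\setminus T'$.

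Two corrections to your write-up. First, the $\widetilde{G_{w_i}}$ component-counting detour and the anticipated ``main obstacle'' about multiple weight classes in $S$ are unnecessary: the exchange argument above works edge-by-edge on $S\setminus T'$ with no need to partition $S$ by weight or to reduce to a single weight class. Second, your claim that edges of $S\cap T'$ must be ``forced in $G$ too'' is neither needed nor generally true; you only need that every edge of $S\setminus T'$ remains avoidable in $G$, and if some edge of $S\cap T'$ is also avoidable in $G$ that only strengthens the desired inequality.
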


Given the decomposition, let $G_{b}^*$ and $G_b$ be the state of the graph after spending budget $b$ according to the decomposition of the optimal solution and $raise\_mst$ respectively. Let $w_{b^*}, w_b$ be the weight function of the edges in $G_b^*$ and $G_b$ respectively. In the discussion following their proof of Lemma \ref{lem:sumeq}, they pointed out that if it is possible to guarantee that the decomposed $S_b^*$ only includes edges $e$ for which $sm\_eq(e, G_b)\subseteq sm\_eq(e, G_b^*)$, then by Lemma \ref{lem:sumeq}, the following inequality holds:
\begin{equation}
\label{wrongineq}
inc\_cost(S^*_b, G_b)\le inc\_cost (S^*_b, G^*_b).
\end{equation}

Since $S_b$ is chosen by $raise\_mst$ to be an optimal set to lift, it follows that $raise\_mst$ is as good as an optimal solution. More specifically, between budget $b$ and $b+\epsilon$ for some $\epsilon>0$, spending $\epsilon$ to uniformly lift the set $S^*_b$ increases the MST in $G_b^*$ by $\frac{\epsilon}{inc\_cost (S^*_b, G_b^*)}$. From inequality~\eqref{wrongineq}, this value is at most $\frac{\epsilon}{inc\_cost (S^*_b, G_b)}$, which is at most $\frac{\epsilon}{inc\_cost (S_b, G_b)}$ due to the choice of $S_b$ by $raise\_mst$. Hence, spending $\epsilon$ on $S_b$ in $G_b$ is as worthwhile as spending it on $S_b^*$ in $G_b^*$. However, this argument relies on the subtle fact that $S_b^*$ is lifted by the same amount $\epsilon/c(S^*_b)$ on every edge. We will see that in the proposed decomposition of the optimal solution in \cite{FS96}, the $S_b^*$'s are not lifting sets in this way. 



Here is the definition of $S_b^*$ in \cite{FS96}: Given $0<b<B$, for $\Delta>0$, let $c_\Delta$ be the total cost of bringing all edges from its initial weight of $w(e)$ to $\min\{w^*(e), w_b(e)+\Delta\}$. Let $\Delta_b$ be such that $c_{\Delta_b}=b$ and let $w^*_b(e)=\min\{w^*(e), w_b(e)+\Delta_b\}$ for all edges $e$. Select $S^*_b$ to include all of the edges $e$ for which $w_b^*(e)=w_b(e)+\Delta_b \le w^*(e)$ and $w^*_{b-\epsilon}(e)< w^*(e)$ for all $b\ge \epsilon >0$. 

\paragraph{Counterexample showing $S^*_b$ is not a lift.} 

Consider a path with three edges $e_1$, $e_2$ and $e_3$ whose initial weights are $0$. Assume the unit cost of raising the weight of any edge is $1$. For $B=4$, let $w^* = (1,2,1)$ and $w = (2,2,0)$ be their final weight in an optimal solution and $raise\_mst$ respectively. Furthermore, assume $raise\_mst$ simply found the set $\{e_1, e_2\}$ and raised its weight by $2$ until all the budget is spent. Following the definition in \cite{FS96}, $S_b = \{1,2\}$ for all $0\le b\le 4$. Recall that $w^*_b = min(w^*,w_b + \Delta_b)$ with $\Delta_b$ such that the cost of increasing to these weights would be $b$.

Then $w^*_2 = (1,1,0)$. When $b=2+\epsilon$, $w_b=(1+\epsilon/2, 1+\epsilon/2, 0)$. Due to the cap imposed by $w$, $w^*_{2+\varepsilon} = (\min(1, 1+\epsilon/2+\Delta_{2+\epsilon}),\min(2, 1+\epsilon/2+\Delta_{2+\epsilon}), \min(1, 0+\Delta_{2+\epsilon})) =(1, 1+\epsilon/2 +\Delta_{2+\epsilon}, \Delta_{2+\epsilon})$. Thus, by setting $\Delta_{2+\epsilon}=\epsilon/4$, we achieve a weight of $w_{2+\epsilon}= (1, 1+\frac{3 \varepsilon}{4}, \frac{\varepsilon}{4})$, spending a total budget of $2+\epsilon$. We see that the lifting set $S^*_{2+\varepsilon}$ should be $\{e_2,e_3\}$. However, $e_2$ and $e_3$ are not lifted by the same amount. Thus, it is unclear how much improvement was achieved by the optimal solution in the range of $(w, w+\epsilon)$, making it incomparable to $raise\_mst$. 

\begin{remark}
We notice that an analogous proof is given in \cite{FS98} in section 3.1 in the context of matroids. This proof contains the same problem, and the following correction can also be applied.  
\end{remark}

\paragraph{A Corrected Decomposition.}

We now provide a valid decomposition of an optimal solution, leading to a correct proof of the same result.

For any real number $0\le b\le B$, let $S_b$ denote the edge set lifted by $raise\_mst$ after spending budget $b$. Let $G_b = (V,E,w_b)$, the state of the graph produced by the algorithm at budget $b$. More precisely, $S_b$ is the set of edges lifted in the $j$-th iteration where $j$ is the largest integer such that the budget spent at the beginning of the iteration is less or equals to $b$. Note that $raise\_mst$ lifts an edge set until it reaches its tolerance. Since all initial weights are integral, the tolerance of a set can only change once its weight were increased to the next integral value.

Then, let $b_0=0$, $b_1, ..., b_{k-1},b_k=B$ be the sequence of budget values that the algorithm $raise\_mst$ spends such that between any consecutive values, the difference was spent by $raise\_mst$ to lift the weight of an edge set to the next integral amount (or until out of budget). This implies that after spending budget $b_i$, an additional $b_{i+1}-b_i$ budget is used to increase the weight of an edge set $S_{b_i}$ by one unit, or possibly less because we ran out of budget. Let $w_b(e)$ denote the weight of edge $e$ after spending budget $b$. Then for $e\in S_{b_i}, e'\notin S_{b_i}$, the weight of $e'$ does not change as budget increases from $b_i$ to $b_{i+1}$, i.e. $w_{b_i}(e')=w_{b_{i+1}}(e')$. Furthermore, since $w_{b_i}(e')$ is an integer, it is not strictly between the interval $(w_{b_i}(e), w_{b_{i+1}}(e))$.



Given an optimal solution, let $w^*$ represent the final weights of the edges. On a high level, we will decompose this optimal solution into a sequence of lifts by following $raise\_mst$ as closely as possible. At some budget $b$, an edge might reach its final weight $w^*$ and we must find an alternative way to spend the excess budget. Thus, we raise the rest of the edges in $S_b$ faster, until they reach what they are supposed to be relative to how much $raise\_mst$ has spent its budget. Lastly, if any excess budget remains, it is spent on lifting all edges (that are not yet capped) by the same amount $\Delta$. Then, the weight of any edge after spending budget $b$ in our decomposition has two components, $w_b$, the amount from emulating $raise\_mst$, and a global increase of $\Delta$.

In order to explicitly define the weight of an edge in our decomposition after spending budget $b$, let $c_{\Delta,b}(e)$ be the cost to increase an edge $e$ with initial weight $w(e)$ to $min(w^*(e), w_b(e) + \Delta)$. Lightly abusing the notation, let $c_{\Delta, b}=\sum_{e\in E(G)} c_{\Delta, b}(e)$. Note that when $\Delta=0$, due to the cap imposed by $w^*$, we might no longer need to spend all of the budget $b$ and thus $c_{0, b}\le b$. 
Also, when there are no caps imposed by $w^*$, we have $c_{0, b} = b$.

Ideally, we would like $c_{\Delta, b} =b$ because it is easier compare how the optimal and $raise\_mst$ spend the budget. Thus, for $0\le i < k$, let $\Delta_i$ be a value such that $c_{\Delta_i,b_i} = b_i$. Informally, since some edges get capped by $w^*$, simply raising the weight of edges to $w_{b_i}$ might cost less than $b_i$. Then, $\Delta_i$ represents how much extra global weight we have to increase every non-capped edge in order to match the spending of $b_i$.

First, we show the following claim:

\begin{claim}
$\{\Delta_i\}_{i=0}^{k-1}$ is a non-decreasing sequence.
\end{claim}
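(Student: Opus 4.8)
The plan is to analyze how $\Delta_i$ is defined by the equation $c_{\Delta_i, b_i} = b_i$ and to compare consecutive values $\Delta_i$ and $\Delta_{i+1}$. Recall that $c_{\Delta, b}(e)$ is the cost to raise $e$ from $w(e)$ to $\min\{w^*(e), w_b(e)+\Delta\}$, and that between budgets $b_i$ and $b_{i+1}$ the algorithm $raise\_mst$ raises exactly the edges of $S_{b_i}$ by one unit (or less, if the budget runs out), while leaving all other edges' weights unchanged. So $w_{b_{i+1}}(e) = w_{b_i}(e)$ for $e \notin S_{b_i}$, and $w_{b_{i+1}}(e) = w_{b_i}(e) + (b_{i+1}-b_i)/c(S_{b_i})$ for $e \in S_{b_i}$ (at most one unit larger). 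The key structural fact to exploit is the one already noted in the excerpt: since $w_{b_i}(e')$ is an integer for $e' \notin S_{b_i}$, it does not lie strictly in the open interval $(w_{b_i}(e), w_{b_{i+1}}(e))$ for $e \in S_{b_i}$.

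**The main argument.**

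First I would argue monotonicity of the map $\Delta \mapsto c_{\Delta, b}$ for fixed $b$: it is non-decreasing (raising $\Delta$ can only raise each $\min\{w^*(e), w_b(e)+\Delta\}$), so $\Delta_i$ is well-defined up to the plateau where all edges are capped; we may take $\Delta_i$ to be, say, the infimum of valid values. The crux is then to show $c_{\Delta_i, b_{i+1}} \le b_{i+1}$, because combined with $c_{\Delta_{i+1}, b_{i+1}} = b_{i+1}$ and the monotonicity of $\Delta \mapsto c_{\Delta, b_{i+1}}$, this forces $\Delta_{i+1} \ge \Delta_i$. To see $c_{\Delta_i, b_{i+1}} \le b_{i+1}$, compare it to $c_{\Delta_i, b_i} = b_i$ term by term. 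For each edge $e$, I want $c_{\Delta_i, b_{i+1}}(e) - c_{\Delta_i, b_i}(e)$ to sum to at most $b_{i+1} - b_i$. For $e \notin S_{b_i}$ we have $w_{b_{i+1}}(e) = w_{b_i}(e)$, so these terms contribute zero. For $e \in S_{b_i}$, the target weight $\min\{w^*(e), w_{b_{i+1}}(e)+\Delta_i\}$ is at most $\min\{w^*(e), w_{b_i}(e)+\Delta_i\} + (w_{b_{i+1}}(e) - w_{b_i}(e))$, so the per-edge increase in cost is at most $c(e)\cdot(w_{b_{i+1}}(e)-w_{b_i}(e))$; summing over $e \in S_{b_i}$ gives at most $c(S_{b_i}) \cdot (b_{i+1}-b_i)/c(S_{b_i}) = b_{i+1}-b_i$. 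Hence $c_{\Delta_i, b_{i+1}} \le b_i + (b_{i+1}-b_i) = b_{i+1}$, as desired.

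**Where the subtlety lies.**

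The step I expect to require the most care is making sure the per-edge bound on the cost increase is actually tight enough — specifically that capping by $w^*$ does not let $c_{\Delta_i,b_{i+1}}$ exceed the naive bound, and that the ``integrality gap'' observation (that $w_{b_i}(e')$ for capped-out or untouched edges does not fall strictly between $w_{b_i}(e)$ and $w_{b_{i+1}}(e)$) is what legitimizes treating the decomposition as a genuine lift later; for the present claim the clean inequality $\min\{a, x+\delta\} \le \min\{a,x\} + \delta$ for $\delta \ge 0$ suffices, so the argument above should go through directly. I would also handle the boundary case where the last step runs out of budget (so the final increment is less than a full unit) — here $b_k = B$ and the same inequality applies verbatim since we only used $w_{b_{i+1}}(e) - w_{b_i}(e) = (b_{i+1}-b_i)/c(S_{b_i})$, which holds regardless of whether a full unit was added. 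Thus the sequence $\{\Delta_i\}_{i=0}^{k-1}$ is non-decreasing.
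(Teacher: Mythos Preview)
Your argument is correct and follows essentially the same route as the paper: both show $c_{\Delta_i, b_{i+1}} \le b_{i+1}$ by bounding the per-edge cost increase by $c(e)(w_{b_{i+1}}(e)-w_{b_i}(e))$ and summing to get at most $b_{i+1}-b_i$, then invoke monotonicity of $\Delta \mapsto c_{\Delta, b_{i+1}}$ together with $c_{\Delta_{i+1}, b_{i+1}} = b_{i+1}$ to conclude $\Delta_{i+1}\ge \Delta_i$. Your write-up is in fact a bit more explicit than the paper's (spelling out the elementary inequality $\min\{a,x+\delta\}\le \min\{a,x\}+\delta$ and the monotonicity step), but the idea is identical.
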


\begin{proof}
We will show that $\Delta_{i+1}\ge \Delta_i$. Consider the difference between $b_i= c_{\Delta_i, b_i}$ and $c_{\Delta_i, b_{i+1}}$. To achieve the second cost, the weight of some of the edges is increased to $w_{b_{i+1}}$. Since some of edges might get capped, the change in weight is at most $w_{b_{i+1}}-w_{b_i}$. Since $raise\_mst$ spends $b_{i+1}-b_i$ to change the edges from weight $w_{b_i}$ to $w_{b_{i+1}}$, the difference in the two above costs is at most $b_{i+1}-b_i$. Then, $c_{\Delta_i, b_{i+1}}\le c_{\Delta_i, b_i} + (b_{i+1}-b_i) = b_{i+1} = c_{\Delta_{i+1}, b_{i+1}}$. Thus, it follows that $\Delta_{i+1}\ge \Delta_i$.
\end{proof}

Note that after spending budget $b_i$, we should expect our decomposition of the optimal solution to ensure each edge has weight $\min\{w^*(e), w_{b_i}(e)+\Delta_i\}$. 
Then, between a budget spending of $b_i$ and $b_{i+1}$,the weight of an uncapped edge in the decomposition should grow from $w_{b_i}(e)+\Delta_i$ to $w_{b_{i+1}}(e)+\Delta_{i+1}$. To achieve this 
increase, we break each interval $[b_i, b_{i+1}]$ of the budget-spending process into two phases. In the first phase, we spend the increase in budget to augment the weight of edges from $w_{b_i}+\Delta_i$ to $w_{b_{i+1}} + \Delta_i$. 
Then, in the second phase, we increase their weight from $w_{b_{i+1}}(e)+\Delta_i$ to $w_{b_{i+1}}(e)+\Delta_{i+1}$. Let $\beta_i = c_{\Delta_i,b_{i+1}} $, representing the transitioning point (in terms of the budget) between the two phases.

We then define $w^*_b$. 

\begin{definition}
The edge weight $w^*_b(e):=$
\begin{itemize}
\item $min(w^*(e), w_{f(b)}(e) + \Delta_i)$, with $f:\mathbb{R}\to \mathbb{R}$ such that $c_{\Delta_i,f(b)} = b$, for $b \in [b_i,\beta_i]$,
\item $min(w^*(e), w_{b_{i+1}} + \Delta(b))$, with $\Delta(b)$ such that $c_{\Delta(b),b_{i+1}} = b$, for $b \in [\beta_i,b_{i+1}]$,
\end{itemize}
\end{definition}

\begin{claim}
\label{cl:fdelta}
The following holds:
    \begin{enumerate}
        \item $f(b_i)=b_i$ and $f(\beta_i)=b_{i+1}$,
        \item $f(b)$ is an increasing function in the interval $[b_i, \beta_i]$
        \item $\Delta(\beta_i) = \Delta_i$ and $\Delta(b_{i+1}) = \Delta_{i+1}$, and
        \item $\Delta(b)$ is an increasing function in the interval $[\beta_i, b_{i+1}]$.
    \end{enumerate}
\end{claim}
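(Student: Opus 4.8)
The plan is to verify each of the four items directly from the definitions of $c_{\Delta,b}$, the points $\beta_i = c_{\Delta_i,b_{i+1}}$, and the implicitly-defined functions $f$ and $\Delta(\cdot)$, using monotonicity of the cost function in its two arguments. First I would record the two basic monotonicity facts: for fixed $b$, the quantity $c_{\Delta,b}$ is non-decreasing in $\Delta$ (increasing $\Delta$ can only raise the capped target $\min\{w^*(e), w_b(e)+\Delta\}$), and it is strictly increasing in $\Delta$ as long as at least one edge is uncapped; similarly, for fixed $\Delta$, $c_{\Delta,b}$ is non-decreasing in $b$ since $w_b(e)$ is non-decreasing in $b$ along the trajectory of \texttt{raise\_mst}. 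I would also note $\beta_i \in [b_i, b_{i+1}]$: the inequality $\beta_i = c_{\Delta_i, b_{i+1}} \ge c_{\Delta_i, b_i} = b_i$ is monotonicity in $b$, and $\beta_i \le b_{i+1} = c_{\Delta_{i+1}, b_{i+1}}$ follows from $\Delta_i \le \Delta_{i+1}$ (the previous Claim) and monotonicity in $\Delta$.

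For item (1): $f$ on $[b_i,\beta_i]$ is defined by $c_{\Delta_i, f(b)} = b$, so $f(b_i)$ is the value with $c_{\Delta_i, f(b_i)} = b_i$, which is $b_i$ itself by definition of $\Delta_i$ (namely $c_{\Delta_i,b_i}=b_i$); and $f(\beta_i)$ satisfies $c_{\Delta_i,f(\beta_i)} = \beta_i = c_{\Delta_i,b_{i+1}}$, so $f(\beta_i)=b_{i+1}$ by injectivity of $b \mapsto c_{\Delta_i,b}$ on the relevant range. For item (3): $\Delta(\cdot)$ on $[\beta_i,b_{i+1}]$ is defined by $c_{\Delta(b),b_{i+1}}=b$; plugging $b=\beta_i$ gives $c_{\Delta(\beta_i),b_{i+1}} = \beta_i = c_{\Delta_i,b_{i+1}}$, hence $\Delta(\beta_i)=\Delta_i$, and $b=b_{i+1}$ gives $c_{\Delta(b_{i+1}),b_{i+1}} = b_{i+1} = c_{\Delta_{i+1},b_{i+1}}$, hence $\Delta(b_{i+1})=\Delta_{i+1}$, again by injectivity. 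Items (2) and (4) then follow because $f$ and $\Delta(\cdot)$ are inverses of strictly increasing functions on these intervals: $b \mapsto c_{\Delta_i, b}$ is strictly increasing (while budget is actually being spent, some edge is uncapped and its weight strictly increases with $b$ along the algorithm's trajectory), so its inverse $f$ is strictly increasing; likewise $\Delta \mapsto c_{\Delta, b_{i+1}}$ is strictly increasing on the range where some edge remains uncapped, so its inverse $\Delta(\cdot)$ is strictly increasing.

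The main obstacle I anticipate is not the monotonicity bookkeeping but making the implicit functions $f$ and $\Delta(\cdot)$ genuinely well-defined: one must argue that the equations $c_{\Delta_i,f(b)} = b$ and $c_{\Delta(b),b_{i+1}} = b$ actually have (unique) solutions for every $b$ in the stated interval. This requires checking that $c_{\Delta,b}$ is continuous in each argument (it is piecewise linear, with breakpoints where edges hit their caps $w^*(e)$ or where \texttt{raise\_mst} switches lifting sets) and that the target value $b$ lies between the relevant endpoint values — which is exactly what items (1) and (3) assert at the endpoints, combined with the intermediate value theorem. A subtle point to handle carefully is the degenerate case where \emph{all} remaining edges are capped at $w^*$ before the budget $b_{i+1}$ is exhausted; in that regime the decomposition simply cannot spend more budget, but this cannot happen before $B$ is reached because the optimal solution $w^*$ by assumption uses total cost at most $B$ while \texttt{raise\_mst} spends the same total $B$, so along the way the partial costs stay comparable. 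I would isolate this as a short lemma (or fold it into the setup preceding the claim) and then the four items follow mechanically as above.
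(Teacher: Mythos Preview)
Your proposal is correct and takes essentially the same approach as the paper: both reduce items (1) and (3) to the defining equations $c_{\Delta_i,b_i}=b_i$, $\beta_i=c_{\Delta_i,b_{i+1}}$, $c_{\Delta_{i+1},b_{i+1}}=b_{i+1}$, and both derive items (2) and (4) from the monotonicity of $c_{\Delta,b}$ in its two arguments. Your framing of $f$ and $\Delta(\cdot)$ as inverses of strictly increasing functions is equivalent to the paper's direct argument that $b<b'$ forces some edge weight to strictly increase; if anything, your explicit attention to well-definedness (continuity, intermediate value theorem, the capping degeneracy) is more careful than the paper, which simply assumes the implicit functions exist.
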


\begin{proof}
The first statement follows from the definition of $\Delta_i$ and $\beta_i$. To prove the second statement, consider $b, b'$ such that $b_i\le b<b'\le \beta_i$. 
Note that $c_{\Delta_i, f(b)}$ and $c_{\Delta_i, f(b')}$ corresponds to the cost of raising edges to a weight of $\min\{w^*(e), w_{f(b)}+\Delta_i\}$ and $\min\{w^*(e), w_{f(b')}+\Delta_i\}$. 
Since by definition, $c_{\Delta_i, f(b)} = b <b' = c_{\Delta_i, f(b')}$, it follows that there exists some edge $e$ such that $w_{f(b)}(e)<w_{f(b')}(e)$. 
Since $w_{b''}(e)$ is non decreasing with respect to $b''$, it follows that $f(b)< f(b')$, proving the second statement. 

The third statement follows from the definition of $\beta_i$ and $\Delta_{i+1}$. For the fourth statement, we use similar arguments as the proof for the second statement. Consider $b, b'$ such that $\beta_i\le b<b'\le b_{i+1}$. Since $c_{\Delta(b), b_{i+1}}$ corresponds to the cost of raising edges to a weight of $\min\{w^*(e), w_{b_{i+1}} +\Delta(b)\}$, does not increase if $\Delta(b)$ does not increase. Then, to achieve a higher cost of $b'$, it follows that $\Delta(b')< \Delta(b)$. 


\end{proof} 

\begin{remark}
If we extend the definition of $f(b)$ and $\Delta(b)$ to $f(b)=b$ for $b\in [\beta_i, b_{i+1}]$ and $\Delta(b)=\Delta_i$ for $b\in [b_i, \beta_i]$, then $f(b)$ and $\Delta(b)$ become two non-decreasing continuous functions.  Furthermore, we have that $w^*_{b}(e) = min(w^*(e), w_{f(b)}(e) + \Delta(b))$ and $w^*_b(e)$ is also a non-decreasing function. It is also clear that after spending a total budget of $B$, all edges should reach its desired maximum weight of $w^*$.
\end{remark}

Here is an intuitive explanation of $f(b)$ and $\Delta(b)$. During the first phase, the decomposition would like to copy $raise\_mst$ and lift all the edges in $S_{b_i}$. However, at some point, an edge $e\in S_{b_i}$ might have reached its full capacity $w^*$. Then, in order to match the same amount of spending as $raise\_mst$, we would need to raise the rest of the edges in $S_{b_i}$ at a faster rate. Thus, to actually spend budget $b$, we had to lift some edges higher, to a point where if all edges in $S_{b_i}$ were raised to that point would have costed us $f(b)$. Similarly, $\Delta(b)$ is an adjustment function, corresponding to a faster rate of increasing $\Delta$ in the second phase, caused by some edges reaching its full capacity.

\paragraph{Example of the Correct Decomposition}: Consider performing the above decomposition on our previous counter-example. Recall that $G$ is a path of three edges whose initial weights are $0$. With a total budget of $4$, $raise\_mst$ lifts $e_1, e_2$ by $2$ to a final weight of $(2, 2, 0)$ for edges $(e_1, e_2, e_3)$. An optimal solution has a final weight of $(1, 2, 1)$. By definition, $b_0=0, b_1=2, b_2=4$ where the set $S=\{e_1, e_2\}$ is first raised to weight $1$ then raised to weight $2$. Since $c_{0, b_0}=0=b_0$, $\Delta_0=0$. Between $b_0$ and $b_1$, our decomposition would copy $raise\_mst$ exactly, so $f(b)=b, \Delta(b)=0$ for $b_0\le b\le b_1$. Note that $\beta_0=b_1$ and $\Delta_1=0$ since there is no second phase in this interval. See Figure 1. 

Now, consider when $b\in [b_1, b_2]$. Note that $c_{1, b_2}$ is the cost of raising the weights to $(\min(1, 3), \min(2, 3), \min(1, 1)) = (1, 2, 1)$. Then, $c_{1, b_2}=4 = b_2$ and therefore $\Delta_2=1$. Note that $\beta_1=c_{\Delta_1, b_2}$ is the cost of reaching a weight of $(1, 2, 0)$ and thus $\beta_1=3$. In the first phase, since $e_1$ is already capped, $c_{0, b'}$ is the cost of reaching a weight of $(1, b'/2, 0)$. Therefore $c_{\Delta_1, b'} = b'/2+1$. In $raise\_mst$, after spending budget $b$ to ensure $c_{\Delta_1, f(b)}=b$, $f(b)=2b-2$. Then, $w^*_b=\min(w^*, w_{f(b)}) = (1, b-1, 0)$, corresponding to spending the current increase in budget to raise $e_2$'s weight from $1$ to $2$. In the second phase, for $3=\beta_1 \le b\le b_2=4$, note that $c_{\Delta', b_2}$ is the cost of raising the weights to $(1, 2, \Delta')$. Then, $c_{\Delta', b_2}= 3+\Delta'$. To ensure $c_{\Delta(b), b_2}= b$, we see that $\Delta(b) = b-3$. Then, $w^*_b= (1, 2, b-3)$ corresponding to lifting the weight of $e_3$ from $0$ to $1$. Note that now each step of the decomposition corresponds to a proper lift. More precisely, between $[b_1, \beta_1]$, only edge $e_2$ is lifted and between $[\beta_1, b_2]$ only edge $e_3$ is lifted.

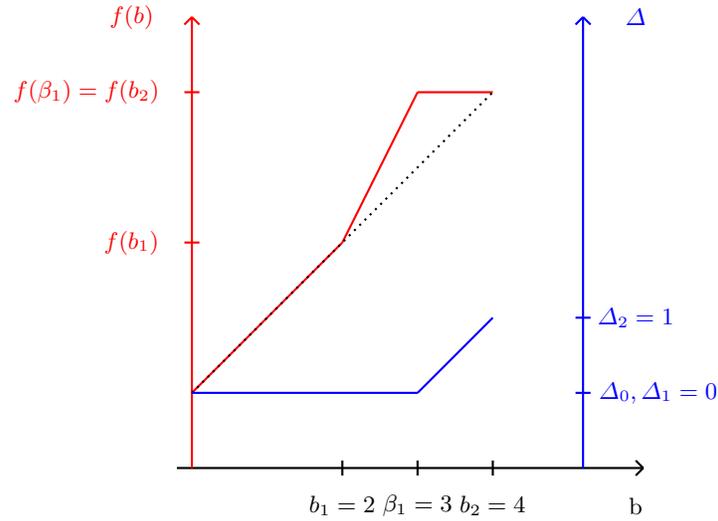
\begin{figure}[h]
\begin{center}
\begin{tikzpicture}[thick]
    
    \draw (-0.2,-1) to (6,-1);
	\draw (2,-1.1) to (2,-0.9);
	\draw (3,-1.1) to (3,-0.9);
	\draw (4,-1.1) to (4,-0.9);
	\draw (5.9, -0.9) to (6, -1);
	\draw (5.9, -1.1) to (6, -1);
	
	\node(x0) at (5.9, -1.5) {b};
	\node (X1) at (2,-1.5) {$b_1=2$};
	\node (Y2) at (3,-1.5) {$\beta_1=3$};
	\node (Z3) at (4,-1.5) {$b_2=4$};
	
	\draw[red] (0,-1) to (0,5);
	\draw[red] (-0.1,2) to (0.1,2);
	\draw[red] (-0.1,4) to (0.1,4);
	\draw[red] (-0.1, 4.9) to (0, 5);
	\draw[red] (0.1, 4.9) to (0, 5);

	\draw[red] (0,0) to (2,2);
	\draw[red] (2,2) to (3,4);
	\draw[red] (3,4) to (4,4);
	
	\node[red] (X) at (-0.8,2) {$f(b_1)$};
	\node[red] (Z) at (-1.4,4) {$f(\beta_1)=f(b_2)$};
	\node[red] (y4) at (-0.8, 5) {$f(b)$};
	
	\draw[dotted] (0,0) to (4,4);
		
	\draw[blue] (5.2,-1) to (5.2,5);
	\draw[blue] (5.3,0) to (5.1,0);
	\draw[blue] (5.3,1) to (5.1,1);	
	\draw[blue] (5.1, 4.9) to (5.2, 5);
	\draw[blue] (5.3, 4.9) to (5.2, 5);
	
	\draw[blue] (0,0) to (3,0);
	\draw[blue] (3,0) to (4,1);
	
	\node[blue] (D) at (5.9,1) {$\Delta_2 = 1$};
	\node[blue] (D) at (6.2,0) {$\Delta_0,\Delta_1 = 0$};
	\node[blue] (y5) at (5.9, 5) {$\Delta$};
	
\end{tikzpicture}
\end{center}
\label{examplegraph}
\caption{The functions $f$ (in red) and $\Delta$ (in blue) in the case of the previously discussed counter-example.}
\end{figure}

\paragraph{The new decomposition provides lifts}:
We now show that if we view the optimal solution as a continuous process that increases the weights of edges according to $w^*_b$, then the decomposition produces a sequence of lifts. For any $0\le b<B$, let $S^*_b$ be the set of edges whose weight changes at budget $b$ according to $w^*_b$. Formally, $S^*_b=\{e: w^*_b(e)< w^*_{b+\epsilon} (e) \forall \epsilon>0\}$. \\

However, when lifting from $b_i$ to $\beta_i$ (or from $\beta_i$ to $b_{i+1}$),  we may however be lifting different sets. Indeed, we could hit one of the upper bounds for one of the edges that we are lifting during this period, and so the set we are lifting would shrink to another, smaller set. This is why we now introduce new breakpoints $(p^i_j)$ in between the $b_i$s and $\beta_i$s, in order to always have proper lifts, which will guarantee that we can indeed compare the optimal solution to the solution given by $raise\_mst$.

\begin{lemma}
For every $0\le i< k$,  there exists a finite sequence $\{p^i_j\}^l_{j=0}$ where $p^i_0=b_i$, $p^i_l=b_{i+1}$ and there exists $0\le k\le l$ such that $p^i_k=\beta_i$. Furthermore, for any interval $[p^i_j, p^i_{j+1}]$, the set of edges whose weights changed according to $w^*_b$ when $b$ is within this interval corresponds to a proper lift. More precisely, for all $p^i_j \le b < p^i_{j+1}$, we have that $S^*_b=S^*_{p_j}$ and $w^*_{b}(e)-w^*_{p_j}(e) = w^*_{b}(e')-w^*_{p_j}(e')$ for any $e, e'\in S^*_b$. 
\end{lemma}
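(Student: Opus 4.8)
The plan is to build the breakpoints out of the finitely many instants at which some edge first reaches its target weight $w^*(e)$ --- call these the \emph{capping times} of the edges --- together with the forced point $\beta_i$. Two structural facts drive everything. First, on the whole interval $[b_i,b_{i+1}]$ the algorithm $raise\_mst$ performs a single uniform lift of the set $S_{b_i}$, so $w_{b'}(e)=w_{b_i}(e)+(b'-b_i)/c(S_{b_i})$ for $e\in S_{b_i}$ and $w_{b'}(e)=w_{b_i}(e)$ for $e\notin S_{b_i}$, for all $b'\in[b_i,b_{i+1}]$. Second, both $f$ and $\Delta(\cdot)$ are continuous, piecewise-linear and strictly increasing: writing $c_{\Delta,b}=\sum_{e}c(e)\big(\min(w^*(e),w_b(e)+\Delta)-w(e)\big)$, for fixed $\Delta_i$ this is a sum of clamped linear functions of $b$, hence continuous, piecewise-linear and non-decreasing, and it is strictly increasing exactly while some edge of $S_{b_i}$ is still uncapped, which is exactly while its value stays below $\beta_i$; so $f$, defined by $c_{\Delta_i,f(b)}=b$, is piecewise-linear and strictly increasing on $[b_i,\beta_i]$ (consistently with Claim~\ref{cl:fdelta}), and symmetrically $\Delta(\cdot)$, defined by $c_{\Delta(b),b_{i+1}}=b$, is piecewise-linear and strictly increasing on $[\beta_i,b_{i+1}]$.

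For the \textbf{first phase} $b\in[b_i,\beta_i]$: if $e\notin S_{b_i}$ then $w_{f(b)}(e)$ is constant, hence $w^*_b(e)$ is constant and $e\notin S^*_b$, so $S^*_b\subseteq S_{b_i}$. If $e\in S_{b_i}$ then $w_{f(b)}(e)$ is strictly increasing in $b$, so $w^*_b(e)=\min(w^*(e),w_{f(b)}(e)+\Delta_i)$ strictly increases until it first hits $w^*(e)$ at a capping time $t_e$ and stays there afterwards (it is non-decreasing and bounded above by $w^*(e)$); hence $e\in S^*_b$ iff $b<t_e$. Listing the distinct values $\{t_e:e\in S_{b_i}\}\cap(b_i,\beta_i)$ in increasing order yields breakpoints such that on each half-open piece $[p_j,p_{j+1})$ the set $\{e\in S_{b_i}:t_e>b\}$ is constant and equals $S^*_{p_j}$; and for $e\in S^*_b$ we get $w^*_b(e)-w^*_{p_j}(e)=w_{f(b)}(e)-w_{f(p_j)}(e)=(f(b)-f(p_j))/c(S_{b_i})$, the same value for every such $e$, so this is a genuine lift.

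For the \textbf{second phase} $b\in[\beta_i,b_{i+1}]$: now $w^*_b(e)=\min(w^*(e),w_{b_{i+1}}(e)+\Delta(b))$ with $w_{b_{i+1}}(e)$ fixed, so $e\in S^*_b$ iff $w_{b_{i+1}}(e)+\Delta(b)<w^*(e)$, i.e. iff $b$ precedes the capping time of $e$; the distinct capping times in $(\beta_i,b_{i+1})$ again give finitely many breakpoints, between which $S^*_b$ is constant, and on such a piece $\Delta(b)$ is linear, so $w^*_b(e)-w^*_{p_j}(e)=\Delta(b)-\Delta(p_j)$ for every uncapped $e$ --- again a uniform lift. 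The two phases match up at $\beta_i$ because $f(\beta_i)=b_{i+1}$ and $\Delta(\beta_i)=\Delta_i$ make the two formulas for $w^*_{\beta_i}$ coincide.

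Assembling, I would take $p^i_0=b_i$, then the first-phase capping breakpoints, then $\beta_i$ (inserted even if no edge is capped exactly there, so that some $p^i_k=\beta_i$), then the second-phase capping breakpoints, then $b_{i+1}$; deleting repeats leaves a finite, strictly increasing sequence with all the claimed properties, finite because each edge contributes at most one capping time over the entire execution. The main obstacle is exactly the two middle paragraphs: establishing that $f$ and $\Delta(\cdot)$ are piecewise-linear and strictly monotone (this is where the explicit clamped-sum form of $c_{\Delta,b}$ and the single-lift behaviour of $raise\_mst$ on $[b_i,b_{i+1}]$ are both essential), and being careful that the active set $S^*_b$ is constant on each half-open interval $[p_j,p_{j+1})$, possibly shrinking only at the right endpoint $p_{j+1}$, which is precisely the form demanded by the statement.
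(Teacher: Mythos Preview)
Your proof is correct and follows essentially the same route as the paper: define the breakpoints as the capping times of edges within each phase together with the forced point $\beta_i$, then on each half-open piece show that $S^*_b$ equals $S_{b_i}$ (respectively, all of $E$) minus the already-capped edges, and verify that the increment $w^*_b(e)-w^*_{p_j}(e)$ is the common value $w_{f(b)}(e)-w_{f(p_j)}(e)$ in phase one and $\Delta(b)-\Delta(p_j)$ in phase two. The only difference is cosmetic: you take the extra step of arguing that $f$ and $\Delta(\cdot)$ are piecewise-linear, whereas the paper relies only on their strict monotonicity (its Claim on $f$ and $\Delta$); your conclusions do not actually use linearity either, since the uniform-lift identities follow directly from the formulas for $w^*_b$ once monotonicity guarantees that no new edge becomes capped on the open piece.
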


\begin{proof}
Fix $0\le i < k$. We find the sequence separately for each of the two phases. Suppose we are in the first phase where $b_i\le b\le \beta_i$. Let $b_i < p_1<p_2<...<\beta_i$ be a sequence of budgets at which some edge reaches its cap. Formally, $p_j$ is in the sequence if there exists an edge $e$ such that $w^*_{p_j-\epsilon}(e) < w^*_{p_j}(e) = w^*(e)$ for all $\epsilon>0$. Note that since there are only finitely many edges and each edge can reach its maximum cap only once, this sequence is finite. We claim that within the interval $(p_j, p_{j+1})$, edges are being properly lifted with respect to $w_b^*$. 

Let $p_j\le b<b'\le p_{j+1}$. Let $P^*=\{e: w^*_{p_j}(e) = w^*(e)\}$, representing the set of all edges that reached its maximum cap at budget $p_j$. First we show $S^*_b=S_{b_i} \setminus P^*$. Let $e\in S^*_{b}$. Since no edges becomes capped between $p_j$ and $p_{j+1}$, it follows that $w^*_{b} = w_{f(b)}(e)+\Delta_i$ and $w^*_{e}=w_{f(b')}+\Delta_i$. By definition of $S^*_{b}$, it follows that $w^*_b<w^*_{b'}$. From Claim \ref{cl:fdelta}, we know $b_i\le f(b)< f(b')\le b_{i+1}$. Then, $w_{b_i}(e) \le w_{f(b)}(e) < w_{f(b')}(e) \le w_{b_{i+1}}(e)$. Thus, the weight of edge $e$ changed with respect to $w_b$, proving $e\in S_{b_i}$. Since $w^*_b(e)<w^*_{b'}(e)$, it also follows that $e$ does not become capped at nor before $b$, proving $e\notin C^*$. Thus, $S_b^*\subseteq S_{b_i}\setminus C^*$. 

For the other direction, let $e'\in S_{b_i}\setminus C^*$. Since $e'$ does not get capped, $w^*_b(e')= w_{f(b)}(e')+\Delta_i$ and $w^*_{b'}(e')= w_{f(b')}(e')+\Delta_i$. 
From Claim \ref{cl:fdelta}, since $b_i\le f(b)<f(b')\le b_{i+1}$, it follows that $w_{f(b)}(e')<w_{f(b')}(e')$ and thus $w^*_b(e)< w^*_{b'}(e)$. 
Since this inequality holds for any $b'<b''<b_{i+1}$, it follows that $w^*_b(e)<w^*_{b+\epsilon}(e)$ for all $\epsilon>0$, proving $e\in S^*_b$. Thus, we conclude that $S^*_b=S^*_{p_j}=S_{b_i}\setminus P^*$.

Let $e, e'\in S^*_b$. Since $e$ is not capped, $w^*_b(e)-w^*_{p_j}(e) = w_{f(b)}(e)-w_{f(p_j)}(e)$. Since $e, e'\in S^*_b\subseteq S_{b_i}$, $raise\_mst$ lifts their weight by the same amount in the interval $[b_i, b_{i+1}]$. Since $b\in [b_i, b_{i+1}]$, it follows that $w_{f(b)}(e)-w_{f(p_j)}(e) = w_{f(b)}(e')-w_{f(p_j)}(e')$, proving our lemma holds for the first phase.

For the second phase of the interval when $\beta_i\le b< \beta_{i+1}$, let $\beta_i < q_1<q_2<...<b_{i+1}$ be a sequence of budgets such that some edge $e$ becomes capped at budget $q_j$. Once again, this sequence is finite. Let $q_j\le b<q_{j+1}$. Since no edges becomes capped, $w^*_b(e)=w_{b_{i+1}}(e)+\Delta(b)$ holds for any $e\in S^*_b$. From Claim \ref{cl:fdelta}, since $\Delta(b)$ is an increasing function, it follows that $w^*_b$ increases for all non-capped edge $e$. Since no new edges become capped, $S^*_b=S^*_{q_j}$. It also follows that for any edge $e\in S^*_b$, $w^*_b(e)-w^*_{q_j}(e)= \Delta(b)-\Delta(q_j)$, proving the lemma also holds for the second phase. 

\end{proof}

Let $P^*_b=\{e: w^*_b(e) = w^*(e)\}$, representing the set of edges that reached its cap at budget $b$. The next corollary follows from the proof of the previous lemma.

\begin{corollary}
\label{cor:lift}
For an interval $[b_i, b_{i+1}]$, given the finite sequence $\{p^i_j\}$ from above, we can explicitly describe $S^*_{p^i_j}$. In particular:
\begin{itemize}
    \item if $b_i\le p^i_j<p^i_{j+1} \le \beta_i$, then $S^*_{p^i_j} = S_{b_i}\setminus P^*_{p^i_j}$,
    \item if $\beta_i\le p^i_j< p^i_{j+1}\le b_{i+1}$, then $S^*_{p^i_j} = E(G^*)\setminus P^*_{p^i_j}$. 
\end{itemize}
\end{corollary}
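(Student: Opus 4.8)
The plan is to read off the two claimed formulas directly from the characterization of $S^*_b$ that was already established inside the proof of the preceding lemma; no new machinery is needed, so the work is just to isolate the relevant line of that argument for each of the two phases and restate it at the breakpoints $p^i_j$.

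First I would handle a breakpoint $p^i_j$ lying in the first phase, i.e.\ $b_i \le p^i_j < p^i_{j+1} \le \beta_i$. By the construction of the sequence $\{p^i_j\}$ in the lemma, no edge reaches its cap $w^*(e)$ strictly between $p^i_j$ and $p^i_{j+1}$, so $P^*_b = P^*_{p^i_j}$ for all $b\in[p^i_j,p^i_{j+1})$; in particular, for $e\notin P^*_{p^i_j}$ we have $w^*_b(e) = w_{f(b)}(e)+\Delta_i$ throughout this subinterval. By Claim~\ref{cl:fdelta}, $f$ is strictly increasing on $[b_i,\beta_i]$ with $f(b)\in[b_i,b_{i+1}]$, and on the whole interval $[b_i,b_{i+1}]$ the algorithm $raise\_mst$ strictly increases exactly the weights of the edges of $S_{b_i}$ (and by a common amount). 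Hence $w^*_b(e)$ is strictly increasing on $[p^i_j,p^i_{j+1})$ precisely for the edges $e\in S_{b_i}\setminus P^*_{p^i_j}$, which is the definition of $S^*_{p^i_j}$; this yields $S^*_{p^i_j}=S_{b_i}\setminus P^*_{p^i_j}$.

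Next I would treat a breakpoint in the second phase, $\beta_i \le p^i_j < p^i_{j+1} \le b_{i+1}$. Here, for every uncapped edge, $w^*_b(e) = w_{b_{i+1}}(e)+\Delta(b)$, and again no edge becomes capped strictly inside $(p^i_j,p^i_{j+1})$, so $P^*_b = P^*_{p^i_j}$ there. Since $\Delta(b)$ is strictly increasing on $[\beta_i,b_{i+1}]$ (Claim~\ref{cl:fdelta}) and is added uniformly to every edge, $w^*_b(e)$ strictly increases on $[p^i_j,p^i_{j+1})$ for every edge $e$ that is not already capped at $p^i_j$; hence $S^*_{p^i_j}=E(G^*)\setminus P^*_{p^i_j}$.

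The only points requiring care --- none of them a real obstacle --- are bookkeeping ones: that $P^*_{p^i_j}$ is indeed constant on $[p^i_j,p^i_{j+1})$ (immediate from choosing the breakpoints as exactly the cap-times, together with the monotonicity of $w^*_b$ in $b$), that the arguments $f(b)$ and $\Delta(b)$ stay inside the ranges where Claim~\ref{cl:fdelta} applies, and that the set written $P^*$ in the proof of the preceding lemma is literally $P^*_{p^i_j}$ in the present statement. With these identifications the corollary is just the explicit form of the set equalities already proved there.
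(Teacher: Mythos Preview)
Your proposal is correct and matches the paper's approach exactly: the paper states only that ``the next corollary follows from the proof of the previous lemma,'' and you have accurately isolated and restated the two set identities $S^*_b = S_{b_i}\setminus P^*$ (first phase) and $S^*_b = \{\text{all uncapped edges}\}$ (second phase) that were already established there. Your bookkeeping remarks about the constancy of $P^*_b$ on $[p^i_j,p^i_{j+1})$ and the identification $P^* = P^*_{p^i_j}$ are precisely the glue needed, and nothing more is required.
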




Now, we prove the correctness of the algorithm.

\begin{theorem}
The algorithm raise\_mst gives an optimal increase for continuous MMST.
\end{theorem}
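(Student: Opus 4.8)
The plan is to bound the total increase produced by $raise\_mst$ below by the total increase realised along the corrected decomposition of an arbitrary optimal solution, comparing the two processes pointwise along the common budget axis $[0,B]$. Fix an optimal solution with final weights $w^*$ and let $opt$ be its increase of the MST weight. Along the decomposition the path $b\mapsto w^*_b$ is continuous and piecewise linear from $w$ to $w^*$, and on each sub-interval $[p^i_j,p^i_{j+1}]$ it uniformly lifts the set $S^*_b=S^*_{p^i_j}$ while spending budget at rate $1$, so the MST weight grows at rate $coverage(S^*_b,G^*_b)/c(S^*_b)=1/inc\_cost(S^*_b,G^*_b)$ for almost every $b$ (the MST being unique away from finitely many values of $b$); integrating, $opt=\int_0^B \frac{db}{inc\_cost(S^*_b,G^*_b)}$. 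Likewise, during an iteration of $raise\_mst$ that lifts a minimum-$inc\_cost$ set $S$ by at most its tolerance the MST weight grows at the constant rate $1/inc\_cost(S,G)$ per unit budget, so the total increase returned by $raise\_mst$ equals $\int_0^B \frac{ds}{inc\_cost(S_s,G_s)}$. The breakpoints $b_0=0<b_1<\dots<b_k=B$ mark the same budget value on both axes (by the definitions of $\beta_i$ and $\Delta_{i+1}$), and on $[b_i,b_{i+1}]$ the algorithm lifts the single set $S_{b_i}$, so $inc\_cost(S_s,G_s)=inc\_cost(S_{b_i},G_{b_i})$ is constant there. Hence it suffices to prove
\[
 inc\_cost(S^*_b,G^*_b)\ \ge\ inc\_cost(S_{b_i},G_{b_i})\qquad\text{for every }b\in[b_i,b_{i+1}];
\]
integrating this over $[b_i,b_{i+1}]$ and summing over $i$ gives $opt\le\int_0^B \frac{ds}{inc\_cost(S_s,G_s)}$, which is the value returned by $raise\_mst$, and since $raise\_mst$ returns a feasible scheme it is optimal.

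For the pointwise inequality I would treat the two phases separately, using Corollary~\ref{cor:lift} for the structure of $S^*_b$ and Lemma~\ref{lem:sumeq} to compare coverages. In the first phase $b\in[b_i,\beta_i]$, Corollary~\ref{cor:lift} gives $S^*_b=S_{b_i}\setminus P^*_b$; each $e\in S^*_b$ has strictly increasing weight at $b$ and is therefore uncapped, so $w^*_b(e)=w_{f(b)}(e)+\Delta_i$. Comparing $G^*_b$ with the $raise\_mst$ state $G_{f(b)}$: if $w_{f(b)}(e')\le w_{f(b)}(e)$ then $w^*_b(e')\le w_{f(b)}(e')+\Delta_i\le w_{f(b)}(e)+\Delta_i=w^*_b(e)$, so $sm\_eq(e,G_{f(b)})\subseteq sm\_eq(e,G^*_b)$ for every $e\in S^*_b$, and Lemma~\ref{lem:sumeq} yields $coverage(S^*_b,G^*_b)\le coverage(S^*_b,G_{f(b)})$, i.e.\ $inc\_cost(S^*_b,G^*_b)\ge inc\_cost(S^*_b,G_{f(b)})$. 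Since $f(b)\in[b_i,b_{i+1}]$ (Claim~\ref{cl:fdelta}) and this interval lies inside the single $raise\_mst$ iteration that lifts $S_{b_i}$, the set $S_{b_i}$ still has minimum $inc\_cost$ in $G_{f(b)}$ (a property of the algorithm established in~\cite{FS96}), so $inc\_cost(S^*_b,G_{f(b)})\ge inc\_cost(S_{b_i},G_{f(b)})=inc\_cost(S_{b_i},G_{b_i})$, the last step because $G_{f(b)}$ differs from $G_{b_i}$ only by a uniform lift of $S_{b_i}$ staying within its tolerance. Chaining these inequalities proves the claim on $[b_i,\beta_i]$. In the second phase $b\in[\beta_i,b_{i+1}]$, Corollary~\ref{cor:lift} gives $S^*_b=E(G)\setminus P^*_b$ and $w^*_b(e)=w_{b_{i+1}}(e)+\Delta(b)$ for $e\in S^*_b$; the same capping argument comparing $G^*_b$ with $G_{b_{i+1}}$ gives $inc\_cost(S^*_b,G^*_b)\ge inc\_cost(S^*_b,G_{b_{i+1}})\ge\min_{S'}inc\_cost(S',G_{b_{i+1}})$, and since the minimum $inc\_cost$ value is non-decreasing over the run of $raise\_mst$ (again from~\cite{FS96}), this is at least $inc\_cost(S_{b_i},G_{b_i})$.

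The main obstacle is exactly what the lemma preceding the theorem was built to secure: that on every piece $[p^i_j,p^i_{j+1}]$ the decomposition performs a proper uniform lift of a single set $S^*_b$ all of whose edges are uncapped, for this is what makes the identities $w^*_b(e)=w_{f(b)}(e)+\Delta_i$ (resp.\ $w^*_b(e)=w_{b_{i+1}}(e)+\Delta(b)$) valid and lets Lemma~\ref{lem:sumeq} be applied with the correct containment of $sm\_eq$-sets; this is precisely the property missing from the original decomposition of~\cite{FS96}, as the counterexample demonstrates. The remaining points are routine: the bookkeeping that the two budget axes coincide at the $b_i$ and that $inc\_cost(S_s,G_s)$ is constant across a full interval $[b_i,b_{i+1}]$ (each $b_i$ is an integer-weight crossing possibly interior to a longer iteration), and the standard fact that lifting a set uniformly increases the MST weight at rate equal to its coverage until the tolerance is reached.
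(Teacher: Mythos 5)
Your plan is, in substance, the paper's own: the same corrected decomposition, the same per-piece comparison along the budget axis, and Lemma~\ref{lem:sumeq} applied through an $sm\_eq$ containment; your first-phase chain (comparing $G^*_b$ with $G_{f(b)}$, using that $inc\_cost(S_{b_i},G_{f(b)})=inc\_cost(S_{b_i},G_{b_i})$ below tolerance and that the currently lifted set is still a minimizer there) is sound modulo the same ``current set remains optimal during its lift'' assertion that the paper itself invokes.

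The genuine gap is in your second phase. You bound $inc\_cost(S^*_b,G^*_b)\ge inc\_cost(S^*_b,G_{b_{i+1}})\ge \min_{S'} inc\_cost(S',G_{b_{i+1}})$ and then assert $\min_{S'} inc\_cost(S',G_{b_{i+1}})\ge inc\_cost(S_{b_i},G_{b_i})$ because ``the minimum $inc\_cost$ is non-decreasing over the run of $raise\_mst$.'' That monotonicity is not proved in this paper and is not a fact you can simply import: $b_{i+1}$ may be exactly the point where the tolerance of $S_{b_i}$ is reached, so the minimum at $G_{b_{i+1}}$ belongs to the \emph{next} iteration, and claiming it can only increase is essentially the concavity of the budget--value curve traced by the greedy algorithm, i.e., a statement nearly equivalent to the optimality you are proving; leaning on it risks circularity (note also that lifting $S_{b_i}$ can \emph{increase} the coverage of other sets, so the inequality is not obvious set-by-set). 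The repair is the paper's route: compare at the same budget point. For $b$ in the second phase, compare $G^*_b$ with the algorithm's state $G_b$: a short claim shows that $w_b(e')\le w_b(e)$ implies $w_{b_{i+1}}(e')\le w_{b_{i+1}}(e)$ (edges outside $S_{b_i}$ are frozen at an integral weight not exceeded until the current unit step completes), hence $sm\_eq(e,G_b)\subseteq sm\_eq(e,G^*_b)$ for uncapped $e$, and Lemma~\ref{lem:sumeq} plus the optimality of the set $S_{b_i}$ being lifted in $G_b$ gives
\begin{equation*}
inc\_cost(S_{b_i},G_{b_i})=inc\_cost(S_{b_i},G_b)\le inc\_cost(S^*_b,G_b)\le inc\_cost(S^*_b,G^*_b),
\end{equation*}
with no appeal to cross-iteration monotonicity. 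With that substitution your integral bookkeeping goes through and the argument matches the paper's proof.
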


\begin{proof}

The strategy of the proof is the same as the one in \cite{FS96}. 
We compare the lifted sets $S_{b_i}$ in $raise\_mst$ to the ones obtained from our decomposition, $S^*_b$. 
Given $0\le i< k$, let $\{p_k\}$ be the finite sequence defined in the previous lemma (we drop the superscript $i$ for convenience). We now show that for any interval $[p_j, p_{j+1}]$, lifting $S_{p_j}=S_{b_i}$ in $raise\_mst$ increases the MST of $G_{p_j}$ by at least as much as the increase in MST of $G^*_{p_k}$ caused by lifting $S^*_{p_j}$ in our decomposition. 

If the lift was performed in the first phase where $b_i \le p_j < p_{j+1} \le \beta_i$, then by Corollary \ref{cor:lift}, $S^*_{p_j}\subseteq S_{p_j}=S_{b_i}$. Let $e\in S^*_{p_j}, e'\in sm\_eq(e, G_{p_j})$. By Corollary \ref{cor:lift}, $e\in S_{b_i}$ and thus $w_{b_i}(e)\le w_{b}(e)<w_{b_{i+1}}(e)$ for all $b_i\le b<b_{i+1}$. We claim that $w_b(e')\le w_b(e)$ for all $b_i\le b \le b_{i+1}$. If $e'\in S_{b_i}$, since $raise\_mst$ only lifts edges of the same weight, $w_b(e')=w_b(e)$. If $e'\notin S_{b_i}$, then $w_{b}(e')=w_{b_i}(e')$. By our choice of $b_i$, $w_{b_{i+1}}(e)\le w_{b_i}(e)+1$. Then, $e'\in sm\_eq(e, G_{p_j})$ implies $w_{p_j}(e') \le \lfloor w_{p_j}(e)\rfloor =w_{b_i}(e)\le w_b(e)$, proving our claim.

It follows by definition of $w_{p_j}^*$ and $S_{p_j}^*$ that $w_{p_j}^*(e) = w_{f(p_j)}(e) + \Delta_i$ and $w_{p_j}^*(e') \leq w_{f(p_j)}(e') + \Delta_i$. Note that since $p_j<\beta_i$, $f(p_j) < b_{i+1}$ by Claim \ref{cl:fdelta}. Then, using the claim in the previous paragraph, $w_{f(p_j)}(e')\le w_{f(p_j)}(e)$, proving that $w^*_{p_j}(e')\le w^*_{p_j}(e)$. Therefore, $e'\in sm\_eq(e, G^*_{p_j})$ and $sm\_eq(e, G_{p_j}) \subset sm\_eq(e, G^*_{p_j})$.
Then, by Lemma \ref{lem:sumeq}, $coverage(S_b^*,G_b^*) \leq coverage(S_b^*,G_{f(b)})$, and

\begin{equation*}
\begin{split}
inc\_cost(S_{p_j},G_{p_j}) & \leq  inc\_cost(S_{p_j}^*,G_{p_j}) \quad \mbox{by the optimal choice of $S_{p_j}=S_{b_i}$} \\
& = \frac{c(S_{p_j}^*)}{coverage(S_{p_j}^*,G_{p_j})} \\
& \leq  \frac{c(S_{p_j}^*)}{coverage(S_{p_j}^*,G^*_{p_j})} = inc\_cost(S_{p_j}^*,G_{p_j}^*)
\end{split}
\end{equation*}

This implies after spending the same amount of budget $p_{j+1}-p_j$, $raise\_mst$ increased the MST value by at least as much as what the optimal solution did in their respective graphs.

Now, suppose the lift occurred in the second phase where $\beta_i\le p_j <p_{j+1}\le b_{i+1}$. Let $e\in S_{p_j}$ and $e'\in sm\_eq(e, G_{p_j})$. By Corollary \ref{cor:lift}, $e$ is an edge that is not capped after spending budget $p_j$. We claim that $w_{b_{i+1}}(e')\le w_{b_{i+1}}(e)$. If $e'\in S_{p_j}$, then $w_{b_{i+1}}(e')=w_{b_{i+1}}(e)$ since both edges are lifted together by $raise\_mst$. If $e'\notin S_{p_j}$, then $w_{b_{i+1}}(e')=w_{p_j}(e')\le w_{p_j}(e)\le w_{b_{i+1}}(e)$, proving our claim.

By definition of $w_b^*$ and $S_b^*$, it follows that $w_{p_j}^*(e) = w_{b_{i+1}}(e) + \Delta(p_j)$ and $w_{p_j}^*(e') \leq w_{b_{i+1}}(e') + \Delta(p_j)$. 
Thus it follows that $w^*_{p_j}(e') \leq w^*_{p_j}(e)$ and $w_b^*(e') \leq w_b^*(e)$. Then, by Lemma \ref{lem:sumeq}, we can conclude that $coverage(S_{p_j}^*,G_{p_j}^*) \leq coverage(S_{p_j}^*,G_{p_j})$. As before,

\begin{equation*}
\begin{split}
inc\_cost(S_{p_j},G_{p_j}) & \leq  inc\_cost(S_{p_j}^*,G_{p_j})  \quad \text{by optimality of $S_{p_j}$}\\
& =  \frac{c(S_{p_j}^*)}{coverage(S_{p_j}^*,G_{p_j})} \\
& \leq  \frac{c(S_{p_j}^*)}{coverage(S_{p_j}^*,G^*_{p_j})} = inc\_cost(S_{p_j}^*,G_{p_j}^*)
\end{split}
\end{equation*}

Then once again, by spending the same budget $p_{j+1}-p_j$, $raise\_mst$ increases the MST value by at least as much as the optimal solution. Then, it follows the total increase of MST performed by $raise\_mst$ is also optimal. 

\end{proof}

\section{Approximating Unit Perturbation MMST}
\label{sec:approx}

In this section, we present approximation algorithms for both the targeted and budgeted versions of MMST.

\subsection{A $2(1-\frac{1}{n})$-Approximation Algorithm for Targeted MMST}

\begin{theorem}
	Targeted MMST permits a $2(1-\frac{1}{n})$-approximation algorithm. 
\end{theorem}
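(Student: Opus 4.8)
\textbf{Proof approach.} The plan is to reformulate the target condition using the classical threshold characterization of the MST weight: for non-negative integer edge weights, the weight of an MST of $(G,w)$ equals $\sum_{t\ge 1}(\kappa(G_{<t})-1)$, where $G_{<t}$ is the subgraph on $V$ formed by the edges of weight strictly less than $t$ and $\kappa(\cdot)$ counts connected components. Consequently, an integral upgrading scheme $x$ is equivalent to a choice of nested threshold graphs $A_1\subseteq A_2\subseteq\cdots\subseteq E$ with $A_t\subseteq B_t:=\{e:w(e)<t\}$ for all $t$ (and $A_t=E$ eventually): the cost of the scheme is $\sum_t c(B_t\setminus A_t)$ and the MST increase it produces is $\sum_t\big(\kappa(A_t)-\kappa(B_t)\big)$. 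Thus Targeted MMST asks for nested sets realizing a total of at least $\Delta:=T-T_0$ new components at minimum $c$-cost, the only coupling between ``layers'' $t$ being nestedness. The per-layer sub-problem --- deleting a minimum $c$-cost edge set from a graph so as to create at least $g$ new components --- is precisely a minimum-weight $k$-cut problem (with weights $c$), for which a Gomory--Hu-tree based algorithm (Saran--Vazirani) gives a $(2-\tfrac{2}{k})$-approximation.

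For the lower bound I would take an optimal scheme $x^*$, read off its nested threshold graphs $A^*_t$, and observe that the layer at $t$ --- which turns $\kappa(B_t)$ components into $\kappa(A^*_t)$ components by deleting $B_t\setminus A^*_t$ --- costs at least the minimum $c$-weight of an edge set whose deletion from the corresponding delete-and-contract graph $\widetilde{G_t}$ creates that many new components. Summing over $t$, together with the fact that the increments sum to at least $\Delta$, gives $\mathrm{OPT}\ge \mathrm{LB}(\Delta)$, where $\mathrm{LB}(\Delta)$ is the value of optimally distributing the required $\Delta$ new components over one or more contracted graphs and paying the corresponding minimum-$k$-cut costs.

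The algorithm then mimics an optimal distribution achieving $\mathrm{LB}(\Delta)$: in each relevant contracted graph build a Gomory--Hu tree with respect to the costs $c$; deleting the $G$-edges dual to the $k-1$ cheapest Gomory--Hu tree edges disconnects the graph into at least $k$ pieces at $c$-cost at most $(2-\tfrac{2}{k})\mu_k$, where $\mu_k$ is the minimum $k$-cut value. Crucially, these prefixes are automatically nested as $k$ grows, so stacking the chosen cuts --- one layer per unit lift --- yields a legitimate nested family $(A_t)$, hence an integral scheme $x$, whose MST increase is at least $\Delta$ and whose total cost is at most $\sum_j(2-\tfrac{2}{k_j})\mu_{k_j}\le (2-\tfrac{2}{n})\,\mathrm{LB}(\Delta)\le 2\big(1-\tfrac{1}{n}\big)\,\mathrm{OPT}$, since every contracted graph has at most $n$ vertices and hence $k_j\le n$. (Working inside a contracted graph on $n'<n$ vertices even improves the bound to $2(1-\tfrac{1}{n'})$.)

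The main obstacle I anticipate is the interaction between the weight levels in the general (non-uniform starting weight) case: one must argue that an optimal solution can be taken to act in a controlled way across the contracted graphs $\widetilde{G_t}$ --- so that the delete-and-contract construction of \cite{FS96} genuinely isolates clean minimum-$k$-cut instances --- and that the stacked Gomory--Hu prefixes still induce a feasible integral $x$ meeting the target once a unit lift collides with a pre-existing weight level. The uniform-weight case sidesteps this entirely and already exhibits the $2(1-\tfrac{1}{n})$ ratio, so the remaining work is pushing the same argument through the layered structure; I would do this via a monotonicity argument comparing $k$-cut values across the nested threshold graphs, and by choosing the level $t$ at which to act so as to match the lower bound $\mathrm{LB}(\Delta)$.
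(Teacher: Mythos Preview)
Your threshold-graph reformulation is correct, and in the uniform-starting-weight case the Gomory--Hu prefix argument does give the $2(1-\tfrac1n)$ ratio. But the general case is genuinely unfinished, and not only in the place you flag. First, ``mimicking an optimal distribution achieving $\mathrm{LB}(\Delta)$'' is not a polynomial-time step as written: $\mathrm{LB}(\Delta)$ is defined through the \emph{exact} minimum-$k$-cut values $\mu_k^{(t)}$, which are NP-hard to compute, so you cannot just read off the optimal $(g_t)$; you would have to optimize over the Gomory--Hu surrogate costs instead and argue separately that this still yields a $2(1-\tfrac1n)$ bound. Second, and more seriously, the nestedness issue you anticipate is real and not dispatched by a ``monotonicity argument'': Gomory--Hu prefixes are nested within a fixed threshold graph, but across levels the graphs $B_t$ change, and a $g_t$-cut in $B_t$ need bear no containment relation to a $g_{t+1}$-cut in $B_{t+1}$, so ``stacking'' need not produce a legal $(A_t)$ and hence a legal integral $x$. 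You would also have to control the number of active layers to get polynomial time.

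The paper sidesteps all of this with a much shorter argument. It takes the continuous optimum $B_T$ from \cite{FS96} as a lower bound on the integral optimum and simply runs the continuous algorithm with budget $B_T$. Because tolerances are integers, every lift except possibly the last is already integral; the only rounding needed is in that final lift. If the last set $S$ induces a partition $P_1,\dots,P_k$ of $\tilde G_{w_l}$, lift $S$ by $\lfloor balance/c(S)\rfloor$ and then lift the $q$ cheapest shores $\delta(P_i)$ by one more unit, where $q$ is the remaining MST increase to reach $T$. Since $\sum_i c(\delta(P_i))=2c(S)$, an averaging gives $\sum_{i\le q}c(\delta(P_i))\le \tfrac{2q}{k}c(S)=\tfrac{2(k-1)}{k}R\le 2(1-\tfrac1n)R$, so the total cost is at most $B_T-R+2(1-\tfrac1n)R\le 2(1-\tfrac1n)B_T$. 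No $k$-cut or Gomory--Hu machinery is needed, and there is no cross-layer interaction to manage because the continuous algorithm has already done all the layered work integrally up to the last step.
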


\begin{proof}
The algorithm in \cite{FS96} computes in strongly polynomial time the function $f(B)$, the maximum weight of the minimum spanning trees of a graph attainable by spending a budget of value $B$ to increase the weights of its edges, potentially in fractional increments. Given a target $T \in \mathbb{N}$, let $B_T$ such that $f(B_T) = T$. Clearly, $B_T$ is a lower bound of the optimal value for (discrete) unit perturbation to reach MST target weight of $T$.

We will follow the algorithm in \cite{FS96} with budget $B_T$. Note that the algorithm lifts edges to their tolerance as long as there is enough budget to do so. Since the tolerances are integral, all edges are lifted an integral amount except for possibly the very last lift. Let $S$ denote the last set of edges computed by this algorithm  and \textit{balance} denote the remaining budget. The set $S$ induces a partition $P_1,\ldots, P_k$ of some graph $\tilde{G}_{w_l}$. In the last iteration, the algorithm in \cite{FS96} lifts all the edges in $S$ by $\frac{balance}{c(S)}$. The weight of the minimum spanning trees increases by $coverage(S,G) \frac{balance}{c(S)} \in \mathbb{N}$ since $T \in \mathbb{N}$. If $\frac{balance}{c(S)} \in \mathbb{N}$, then we are done. Otherwise, we first lift all the edges in $S$ by $\lfloor\frac{balance}{c(S)} \rfloor$. Then, it remains to increase the weight of the minimum spanning trees by

$$coverage(S,G) (\frac{balance}{c(S)} - \lfloor\frac{balance}{c(S)} \rfloor) < coverage(S,G) = coverage(S,\tilde{G}_{w_l}) = k-1.$$

The remaining budget $R$ is

\begin{equation}\label{eq:budget}
  R=balance - c(S) \lfloor\frac{balance}{c(S)} \rfloor = c(S) (\frac{balance}{c(S)} - \lfloor\frac{balance}{c(S)} \rfloor).  
\end{equation}

Assume that the node sets $P_i$ corresponding to the shores of the partition are ranked by increasing costs $c(\delta(P_i))$. Choose the $q = coverage(S,G) (\frac{balance}{c(S)} - \lfloor\frac{balance}{c(S)} \rfloor) $ cheapest shores $P_1,\ldots,P_{q}$ and lift all the edges in $\delta(P_1),\ldots,\delta(P_k)$ by one. Since \textit{tolerance}$(S,G) \ge \frac{balance}{c(S)} > \lfloor\frac{balance}{c(S)} \rfloor$, it follows that the weight of the minimum spanning trees increases by at least $q$. By the choice of the sets $P_1,\ldots,P_{q}$, we have

$$\frac{1}{q}\sum_{i=1}^qc(P_i) \leq \frac{1}{k}\sum_{i=1}^kc(P_i) = \frac{2c(S)}{k}.$$

This yields

\begin{align*}
	\sum_{i=1}^qc(P_i) &\leq \frac{2q}{k} c(S)\\
	                    &= \frac{2coverage(S,G)}{k} (\frac{balance}{c(S)} - \lfloor\frac{balance}{c(S)} \rfloor) c(S)\\
	                    &= \frac{2(k-1)}{k} (\frac{balance}{c(S)} - \lfloor\frac{balance}{c(S)} \rfloor) c(S)\\
	                    &\leq \frac{2(n-1)}{n} (\frac{balance}{c(S)} - \lfloor\frac{balance}{c(S)} \rfloor) c(S)\\
	                    &= \frac{2(n-1)}{n} R.
\end{align*}

Therefore, the total cost used by the approximation algorithm to reach MST target weight of $T$ is bounded by
\begin{align*}
B_T - R +	\sum_{i=1}^qc(P_i) &\leq B_T - R +\frac{2(n-1)}{n} R\\
&= B_T + (1-\frac{2}{n}) R\\
&\leq 2(1-\frac{1}{n})B_T.
\end{align*}

The theorem follows immediately.

\end{proof}

\subsection{An $\frac{opt}{2}-1$-Solution for Budgeted MMST}

Recall that in an $\frac{opt}{2}-1$-solution, the increase in the weight of the MST is at least $\frac{opt}{2}-1$, where $opt$ is the optimal increase in the weight of the MST for the given budget $B$.

\begin{theorem}
 \label{GreedyAlgo} 
	Budgeted MMST has a polynomial-time algorithm that produces an $\frac{opt}{2}-1$ solution, where $opt$ is the optimal increase in weight of the MST. 
\end{theorem}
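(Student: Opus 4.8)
The plan is to mirror the structure of the Targeted MMST approximation but in the opposite direction: run the strongly polynomial continuous algorithm \texttt{raise\_mst} with the given budget $B$, and then argue that rounding its (essentially integral) behaviour down costs us only a small additive and multiplicative loss. Recall that \texttt{raise\_mst} lifts each chosen set to its tolerance, and since all initial weights are integral the tolerances are integral, so every lift except possibly the last one increases edge weights by integral amounts and increases the MST weight by an integral amount $coverage(S,G)$ per unit. Thus the only source of non-integrality is the last partial lift of some set $S$ by an amount $balance/c(S)$ that need not be an integer, together with the fact that the optimal \emph{continuous} value $f_G(B)$ is an upper bound on the optimal \emph{discrete} value $opt$.

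The key steps, in order: (1) Run \texttt{raise\_mst}$(G,w,c,B)$; let all lifts before the last be executed exactly — these are integral, feasible for the discrete problem, and together raise the MST by some integer $M_0$. Let $S$ be the last set, with remaining budget $balance$, and recall $S$ corresponds to a partition $P_1,\dots,P_k$ of some $\widetilde{G}_{w_l}$ with $coverage(S,G)=k-1$ and $inc\_cost(S,G)=c(S)/(k-1)$. (2) Lift $S$ by the integral amount $\lfloor balance/c(S)\rfloor$; this is feasible and raises the MST by an integer. The leftover continuous gain is $(k-1)(balance/c(S)-\lfloor balance/c(S)\rfloor) < k-1 \le n-1$, so we have lost at most $n-1 < k-1+1$ units, in particular strictly less than $k-1$. (3) To recover part of this, sort the shores by $c(\delta(P_i))$ and lift the cheapest one, $\delta(P_1)$, by one unit; since $\frac{1}{k}\sum_i c(\delta(P_i)) = 2c(S)/k$, the cheapest shore costs at most $2c(S)/k \le 2\,inc\_cost(S,G) = 2\,balance/\big((k-1)\cdot(balance/c(S))/1\big)$ — more simply, $c(\delta(P_1)) \le 2c(S)/k \le 2c(S)/(k-1)\cdot\frac{k-1}{k}$, which is within the remaining budget $R = balance - c(S)\lfloor balance/c(S)\rfloor$ whenever $R \ge c(S)/k$ roughly; one should instead lift as many cheapest shores as the leftover budget $R$ allows. (4) Putting it together: the discrete solution produced has MST increase at least $M_0 + \lfloor balance/c(S)\rfloor \cdot (k-1) + (\text{extra from cheap shores})$. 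Comparing against $opt \le f_G(B) = M_0 + \lfloor balance/c(S)\rfloor(k-1) + (k-1)(balance/c(S)-\lfloor balance/c(S)\rfloor)$, the gap is at most the fractional part contribution, which is bounded by a constant fraction of $opt$ plus an additive constant; the precise arithmetic should yield increase $\ge opt/2 - 1$.

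The main obstacle I anticipate is the last-step accounting: unlike the targeted version where one simply needs the \emph{cost} of one extra $k$-cut-style round to be at most twice $inc\_cost$, here we need a lower bound on the MST \emph{increase} achievable with the leftover fractional budget $R$, and $R$ may be too small to afford even the cheapest shore. The resolution is to note that if $R$ is too small to buy any shore, then the fractional part was small and the rounding loss is itself small (at most $R \cdot \frac{k-1}{c(S)} \le R/\!\inf_e c(e)$ is not quite it — rather, the lost MST units equal $(k-1)\cdot\frac{R}{c(S)}$, and if this is less than $1$ we lose less than one unit); and if $R$ is large enough to afford some shores, the averaging bound $\sum_{i=1}^q c(\delta(P_i)) \le \frac{2q}{k}c(S)$ lets us buy about $q \approx \frac{k}{2c(S)}R$ shores, recovering roughly half the lost units. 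Combining the two regimes — plus the fact that $opt \le f_G(B)$ and that all earlier lifts are reproduced exactly — should give that the produced increase is at least $\tfrac12 f_G(B) - 1 \ge \tfrac12 opt - 1$; I would state the final chain of inequalities analogously to the displayed computation in the Targeted MMST proof, taking care that the ``$-1$'' absorbs the at-most-one-unit loss when no shore is affordable and the rounding of $q$.
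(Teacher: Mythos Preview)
Your plan is correct and follows essentially the same approach as the paper: run \texttt{raise\_mst}, keep all integral lifts, round down the last lift of $S$ by $\lfloor balance/c(S)\rfloor$, then spend the residual budget $R$ on as many cheapest shores $\delta(P_i)$ of the induced partition as affordable, using the averaging bound $\frac{1}{q+1}\sum_{i\le q+1} c(\delta(P_i)) \le \frac{2c(S)}{k}$ to show $q \ge I/2 - 1$ where $I$ is the fractional loss, and handling the case $R < c(\delta(P_1))$ by noting $I/2 < 1$ there. The paper's arithmetic is exactly the two-regime argument you outline, so once you clean up the muddled display in your step~(3) and write out the final chain $A + (k-1)\lfloor balance/c(S)\rfloor + q \ge \tfrac{1}{2}\bigl(A + (k-1)\lfloor balance/c(S)\rfloor + I\bigr) - 1 \ge \tfrac{opt}{2} - 1$, you will have reproduced the proof.
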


\begin{proof}
Again, we follow the algorithm in \cite{FS96}. Let $S$ denote the last set of edges computed by this algorithm  and \textit{balance} denote the remaining budget. The set $S$ induces a partition $P_1,\ldots, P_k$ of some graph $\tilde{G}_{w_l}$. In the last iteration, the algorithm in \cite{FS96} lifts all the edges in $S$ by $\frac{balance}{c(S)}$. If $\frac{balance}{c(S)} \in \mathbb{N}$, then we are done. Otherwise, we lift all the edges in $S$ by $\lfloor\frac{balance}{c(S)} \rfloor$.

\noindent After this lifting operation, the remaining budget $R$ is defined in (\ref{eq:budget}) and the weight of the minimum spanning trees increases by

$$A+coverage(S,G) \lfloor\frac{balance}{c(S)} \rfloor,$$

\noindent where $A$ denotes the total increase in the MST weight before lifting $S$.  Therefore, the remaining increase $I$ of the weight of the minimum spanning trees in comparison with the algorithm in \cite{FS96} is

$$I = coverage(S,G) (\frac{balance}{c(S)} - \lfloor\frac{balance}{c(S)} \rfloor).$$
 
Since the algorithm in \cite{FS96} optimally uses the budget fractionally, the optimal increase in the MST using continuous expenditure of the budget $B$ is 

$$A + coverage(S,G) \lfloor\frac{balance}{c(S)} \rfloor + I,$$

\noindent which is an upper bound on $opt$, the optimal increase in the weight of the MST for the given budget $B$.

Assume that the shores $P_i$ are ranked by increasing costs $c(\delta(P_i))$.  Suppose first that there exists an index $q$ such that $c(\delta(P_1\cup\cdots\cup P_q)) \le R < c(\delta(P_1\cup\cdots\cup P_{q+1}))$. 
Lift all the edges in $\delta(P_1),\ldots,\delta(P_q)$ by one. Since \textit{tolerance}$(S,G) \ge \frac{balance}{c(S)} > \lfloor\frac{balance}{c(S)} \rfloor$, it follows that the weight of the minimum spanning trees increases by at least $q$ in this step. 

By the ordering of the node sets  $P_i$, we have

$$\frac{1}{q+1}\sum_{i=1}^{q+1}c(\delta(P_i)) \leq \frac{1}{k}\sum_{i=1}^kc(\delta(P_i)) = \frac{2c(S)}{k}.$$

\noindent Therefore,

\begin{align}\label{eq:lb}
	q+1 &\geq \frac{k\sum_{i=1}^{q+1}c(\delta(P_i))}{2c(S)} \nonumber\\
	&\geq \frac{k c(\delta(P_1\cup\cdots\cup P_{q+1}))}{2c(S)} \nonumber\\
		&> \frac{k R}{2c(S)} \nonumber\\
		&= \frac{k}{2}(\frac{balance}{c(S)} - \lfloor\frac{balance}{c(S)} \rfloor) \nonumber\\
		&\geq \frac{coverage(S,G)}{2}(\frac{balance}{c(S)} - \lfloor\frac{balance}{c(S)} \rfloor) \nonumber\\
		&= \frac{I}{2}.
\end{align}

\noindent This implies $q >  \frac{I}{2} -1$, and hence the approximation algorithm increases the weight of the minimum spanning trees increases by at least 

\begin{align*}
A+coverage(S,G) \lfloor\frac{balance}{c(S)} \rfloor + q &> \frac{1}{2}(A+coverage(S,G) \lfloor\frac{balance}{c(S)} \rfloor +I) -1\\
&\geq \frac{opt}{2}-1.
\end{align*}

In the case where  $c(\delta(P_1))> R$, since $c(\delta(P_1)) \leq \frac{1}{k}\sum_{i=1}^kc(\delta(P_i))$, we have

$$\frac{kR}{2c(S)} < \frac{kc(\delta(P_1))}{2c(S)} \leq 1.$$

The inequalities in (\ref{eq:lb}) imply that $\frac{I}{2} \leq \frac{kR}{2c(S)}$ and thus $\frac{I}{2} \leq 1$. Therefore, by doing nothing we are still within an additive one of the optimal value of $I$.

\begin{align*}
A +coverage(S,G) \lfloor\frac{balance}{c(S)} \rfloor &\geq A+ coverage(S,G) \lfloor\frac{balance}{c(S)} \rfloor+ \frac{I}{2}-1\\
& \geq \frac{1}{2}(A+coverage(S,G) \lfloor\frac{balance}{c(S)} \rfloor +I) -1\\
&\geq \frac{opt}{2}-1.
\end{align*}

\end{proof}

\section{Hardness of MMST}
\label{sec:hardness}

If we impose upper bounds on the extent to which edge weights can be increased, we can encode a $k$-cut problem using MMSTU (the version of MMST with such upper bounds).
\begin{lemma}
The Budgeted MMSTU problem is NP-complete.
\end{lemma}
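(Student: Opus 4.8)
The plan is to establish membership in NP and then give a polynomial-time (Karp) reduction from the minimum-weight $k$-cut problem with $k$ part of the input, which is NP-hard. For membership, observe that a candidate upgrading scheme $x$ with $x\le u$ has bit-size polynomial in the input (each $x(e)\le u(e)$), so it is a valid certificate: one verifies $x\le u$ coordinatewise, checks $c_x = c\cdot x \le B$, and computes the weight of a minimum spanning tree of $(G,w+x)$, all in polynomial time. Hence the decision version (``is there a feasible scheme increasing the MST weight by at least $t$?'') lies in NP.

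For hardness, I would start from an instance of minimum-weight $k$-cut: a connected graph $H=(V,E)$ (connectedness is without loss of generality), edge weights $\omega:E\to\mathbb N$, and integers $k\ge 2$ and $W$, where the question is whether there is $F\subseteq E$ with $\omega(F)\le W$ such that $H-F$ has at least $k$ connected components. Build an MMSTU instance by taking $G:=H$, initial weights $w\equiv 0$, upgrading costs $c:=\omega$, upper bounds $u\equiv 1$, budget $B:=W$, and asking whether the weight of the MST can be increased by at least $k-1$. This construction is clearly computable in polynomial time, and by instead setting $c\equiv 1$ and reducing from unweighted $k$-cut it already yields hardness with unit upgrading costs.

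The correctness argument rests on one clean observation. Since $u\equiv 1$, every feasible scheme is the indicator $x=\mathbf 1_F$ of an edge set $F$, with $c_x=\omega(F)$, and in $(G,w+x)$ the edges of $F$ have weight $1$ and all other edges have weight $0$. Running Kruskal, a minimum spanning tree first uses a spanning forest of $H-F$ (all weight-$0$ edges) and then pays exactly one unit per additional edge needed to join the components of $H-F$; thus its weight equals $(\text{number of components of }H-F)-1$. (This is exactly the instance, in the observation of Section~\ref{sec:correct}, where lifting a same-weight edge set $S$ changes the MST weight by $coverage(S,G)$, the number of extra components $S$ creates.) Therefore a feasible scheme achieving increase $\ge k-1$ exists if and only if some $F$ with $\omega(F)\le W$ leaves $H-F$ with at least $k$ components, i.e.\ iff the $k$-cut instance is a yes-instance. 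Together with NP membership, this proves Budgeted MMSTU is NP-complete.

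The only non-bookkeeping point — the ``main obstacle'' — is pinning down the post-perturbation MST weight as precisely ``components of $H-F$ minus one''; the remaining work (polynomiality of the reduction, the certificate-size argument for NP, and citing NP-hardness of $k$-cut when $k$ is part of the input) is routine.
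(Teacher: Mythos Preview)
Your proof is correct and follows essentially the same approach as the paper: set $w\equiv 0$, $u\equiv 1$, and use the observation that upgrading a set $F$ yields MST weight equal to the number of components of $G-F$ minus one, so feasibility with increase $\ge k-1$ under budget $B$ corresponds exactly to a $k$-cut of weight at most $B$. Your presentation is in fact slightly cleaner than the paper's, which phrases the reduction as searching over all budgets $B\in[0,n]$ to find the minimum $k$-cut value rather than as a single Karp reduction from the decision problem; you also handle weighted $k$-cut directly via $c:=\omega$, whereas the paper works with unit costs throughout.
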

\begin{proof}
The problem is clearly in NP. We now reduce min $k$-cut to this problem. Let $(G,k)$ be an instance of $k$-cut. For every edge $e\in E(G)$, consider assigning a cost  $c(e) \equiv 1$, a weight $w(e) \equiv 0$ and an upperbound $u(e) \equiv 1$. For $B\in [0, n]$, consider solving the instance of MMSTU with budget $B$ and let $E_B$ be the set of edges giving by the solution that raises the MST to weight $T_B$. Since $G$ starts with a MST of weight $0$ and every edge in $S_B$ has a final weight of $1$, it follows that $coverage(S_B, G)=T_B$. Then,, it follows that the smallest $B$ such that $T_B\ge k-1$ provides a minimum $k$-cut for $G$. 

\end{proof}


What is perhaps a bit surprising is that even without upper bounds and all upgrade costs being unit, MMST is NP-hard.
\begin{theorem}
\label{th:MMSTNP}
Budgeted MMST is NP-complete even with unit downgrading costs. 
\end{theorem}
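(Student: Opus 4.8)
The plan is to establish membership in NP (routine) and then to reduce the Budgeted MMSTU problem with all upper bounds equal to $1$ — already shown NP-complete above — to Budgeted MMST with unit costs. For NP membership, a candidate upgrading scheme $x$ may be assumed to satisfy $x_e\le B$ for every edge (larger values are useless and exceed the budget), so $x$ has polynomial size, and both $c\cdot x$ and a minimum spanning tree of $(G,w+x)$ are computable in polynomial time; this suffices to verify feasibility and the target increase.

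For the reduction, I would take an instance of Budgeted MMSTU with unit upgrading costs, $u\equiv 1$, and budget $B$ (the NP-hardness proof above produces exactly such instances from minimum $k$-cut, and one may assume $B\le|E|$, since a $k$-cut never needs more than $|E|$ edges). Build a new graph $G$ on the same vertex set: for every edge $e=uv$, keep $e$ with its original weight $w(e)$ and unit cost, and in addition attach $B+1$ parallel ``guard'' copies of $uv$, each of weight $w(e)+1$ and unit cost; keep the budget $B$ and the same target increase. Since $B\le|E|$, the graph $G$ has polynomially many edges (one may subdivide the guard copies if a simple graph is preferred), so this is a polynomial-time reduction producing a unit-cost instance.

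The correctness hinges on the guard copies simulating an upper bound of $1$ with no need for non-unit costs. Within total budget $B$, the minimum weight among the copies of the class of $e$ — the only thing a minimum spanning tree reacts to — can only be $w(e)$ or $w(e)+1$: pushing every one of the $B+2$ copies of that class to weight $\ge w(e)+2$ would cost at least $2$ on the original plus at least $1$ on each of the $B+1$ guards, a total exceeding $B$. Moreover, given any within-budget scheme, zeroing out all guard upgrades and raising the original $e$ by exactly $1$ precisely when the class's effective weight was $w(e)+1$ leaves every effective weight (hence the entire MST) unchanged and does not increase the cost, since each such $e$ already had original upgrade $\ge 1$. Hence every within-budget scheme is dominated by one that corresponds to an MMSTU solution in the original instance, and conversely every MMSTU solution is realizable in $G$ at the same cost; so $f_G(B)$ equals the optimum of the MMSTU instance. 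Invoking the equivalence already proved for MMSTU (optimum $\ge k-1$ with budget $B$ iff $G_0$ has a $k$-cut with at most $B$ edges) finishes the reduction, and NP-completeness follows since the problem is in NP.

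The one step that needs genuine care is the ``effective weight at most $w(e)+1$'' claim: it is what collapses the a priori stronger unit-perturbation adversary — who could in principle raise a small cut by large integral amounts — down to a plain edge-deletion ($k$-cut) strategy, and it relies on both the guard count $B+1$ and the polynomial bound on $B$. An alternative route, matching the ``approximation-preserving reduction'' phrasing in the introduction, is to run the same gadget starting from the Max Components problem instead of minimum $k$-cut; the gadget and its analysis are unchanged, and this additionally yields the hardness-of-approximation statements stated later.
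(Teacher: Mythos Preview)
Your proof is correct and takes a different route from the paper's. The paper reduces directly from Min $k$-Cut: for each original edge $uv$ it attaches a fresh clique on $n^{2}$ vertices, all of them joined to both $u$ and $v$; original edges receive weight $0$, all new edges weight $1$, and all upgrade costs are $1$. The large clique serves exactly the purpose your $B{+}1$ parallel guard copies serve---since the budget is at most $n$, the clique containing $u$ and $v$ remains connected after removing any upgraded edges, so any optimal scheme can be assumed to upgrade only original edges and each by at most one---so both arguments encode the same ``cap the useful upgrade at $1$'' idea and then read off a $k$-cut.

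The trade-offs: the paper's clique gadget yields a simple graph outright, at the price of $\Theta(n^{2}|E|)$ new vertices and a slightly longer case analysis (separately ruling out upgrades on new edges and repeated upgrades on originals). Your construction is shorter and modular---it simply reuses the already-proved MMSTU hardness---but as written it produces a multigraph; the parenthetical ``one may subdivide the guard copies if a simple graph is preferred'' is plausible but not argued, since subdivision adds vertices and shifts the base MST, so it would need its own (short) correctness check. Both reductions preserve the optimum value exactly, so either one supports the subsequent approximation-hardness corollary.
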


\begin{proof}
    Given an instance of Min $k$-Cut, $(G, k)$, consider the following auxiliary graph $G'$. For every edge $e=uv\in E(G)$, we add a clique of size $n^2$ and add edges from the vertices of the clique  to the vertices $u$ and $v$. Original edges start with weight $0$ and all newly added edges have weight $1$. The upgrading cost $c$ for all edges is $1$. Note that an initial MST needs to connect the newly added $n^2|E(G)|$ vertices and thus has weight at least $n^2|E|$. An MST of weight $n^2|E(G)|$ exists by simply taking a spanning tree of the original zero-weight edges and greedily appending the newly added vertices to this spanning tree.
    
    Suppose $F\subseteq E(G)$ is an optimal min $k$-cut and $|F|=b$. Consider spending $b$ budget to upgrade the edges in $F$ to weight $1$. Since $G\backslash F$ has $k$ connected components in $G$, all the edges with final weight of $0$, $E(G)\setminus F$, also forms $k$ connected components in $G'$. Then it follows that has final weight of an MST in $G'$ is at least $2n|E|+k-1$. One can also easily construct an MST with such weight by first taking a maximal forest using the edges in $E(G)\setminus F$, then add $k-1$ edges in $F$ to form a tree that spas all the original vertices and lastly greedily append the newly added vertices in the clique. Therefore under budget $b$, one can increase the MST value by $k-1$. Then, our goal is to find the least amount of budget $b$ that improves the MST value by $k-1$ and show such solution can be translated to a $k$-cut.
    
    Consider greedily trying all values of $b$ where $0\le b\le |E(G)|< n$ and solve the instances of MMST on $G'$ with budget $b$. Let $b'$ be the smallest $b$ such that the solution to MMST increased the MST value by $k-1$. Such $b'$ exists since raising the weight of every original edge by $1$ increases the MST by at least $ n\ge k-1$. |Let $F'$ be the set of edges whose weight was changed by such an optimal solution.
    
    First, we claim that no newly added edge is in $F'$. Suppose for the sake of contradiction that there exists such edge $e'\in F'$ where $e'$ is also in $E(G')\setminus E(G)$. Let $T$ be an MST after upgrading the edges in $F'$ to their final weight. Since $e'$ is a newly-added edge, let $e\in E(G)$ be its associated original edge. Note that the endpoints of $e$ along with the newly-added clique forms a large clique $K'$ of size $n^2+2$. Since $|F'|\le b'<n^2$, $K'\backslash F'$ is connected. Then, one can construct an MST $T'$ from $T$ that avoids any edges in $E(K)\cap F'$. Consider the path $P'$ in $T'$ that connects the endpoints of $e'$. Note that $P'$ lies in $K'$ and the final weight of every edge in $P'$ is $1$. Then, if we did not upgrade $e'$, the weight of every edge in $P'$ is still at most the weight of $e'$. Then, $T'$ is also an MST if we do not upgrade $e'$, contradicting the minimum choice of $b'$.

    
    Next, we claim that no original edge is upgraded more than once. The argument is similar to the one before. Suppose for the sake of contradiction that an edge $e\in E(G)$ is upgraded more than once. Let $K'$ be the clique of size $n^2+2$ containing $e$. By similar argument, one can show that there exists an MST $T'$ in $G'$ after upgrading such that $e\notin T'$. Then, one can similarly argue that $T'$ remains an MST if $e$ is only upgraded to $1$, contradicting the minimum choice of $b'$. 
    
    
    Then, it follows that $|F'|=b'$. Let $T'$ be an MST of $G'$ before upgrading the edges in $F'$. Note that $|E(T')\cap F'|\ge k-1$ since the weight of the final MST is increased by at least $k-1$. We claim that $T=T'\backslash (V(G')\setminus V(G))$ is a spanning tree of $G$. Suppose for the sake of contradiction that $T$ is not connected and contains a cut $\delta(V_1)$. Since $G$ is connected, there exists $e\in \delta(V_1)\cap E(G)$. Let $e'\in \delta(V_1)\cap E(T')$. Note that $e'$ has weight $1$ while $e$ has weight $0$. Then substituting $e'$ with $e$ in $T'$ creates a better MST, a contradiction. 
    
    This implies any MST $T'$ in $G'$ restricted to $G$ is a spanning tree in $G$. Then, $coverage(F', G)\ge k-1$ and thus $F'$ is a $k$-cut in $G$. By our choice of $b'$, it follows that $F'$ is an optimal $k$-cut. 
\end{proof}

\begin{corollary}
If there exists an $\alpha$-approximation of Budgeted Discrete MMST then there exists an $\alpha$-approximation of Max Components. \\
If there exists an $\alpha$-approximation of Targeted Discrete MMST then there exists an $\alpha$-approximation of $k$-cut.
\end{corollary}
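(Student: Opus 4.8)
The plan is to show that the reduction constructed in the proof of Theorem~\ref{th:MMSTNP} is already approximation--preserving, once we make two cosmetic adjustments to it and add a ``cleanup'' post-processing step. Recall the construction: from a graph $G$ on $n$ vertices we build $G'$ by attaching to each edge $e=uv$ of $G$ a fresh clique on $n^2$ vertices, all of whose vertices are also joined to $u$ and to $v$; original edges start at weight $0$, all new edges start at weight $1$, and in the theorem every per-unit upgrade cost equals $1$. For the corollary we keep this combinatorial skeleton but (i) set the per-unit upgrade cost of an original edge $e$ to its $k$-cut weight $w_G(e)$ (which is $1$ in the cardinality case), and (ii) set the per-unit upgrade cost of every newly-added edge to a prohibitively large value $M$ --- larger than $B$ in the budgeted case, and larger than $\alpha\cdot\sum_e w_G(e)$ in the targeted case. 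Deleting an edge set $F\subseteq E(G)$ leaves $G\setminus F$ with $\mathrm{coverage}(F,G)+\#\mathrm{comp}(G)$ components, so the analysis in Theorem~\ref{th:MMSTNP} shows that upgrading each edge of a $k$-cut $F$ by one unit raises the weight of the MST of $G'$ by exactly $\mathrm{coverage}(F,G)$, and, conversely, any upgrade that touches \emph{only original edges, each at most once}, corresponds to a set $F'\subseteq E(G)$ with $\mathrm{coverage}(F',G)$ equal to the attained increase.

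The cleanup step is what promotes this correspondence from the optimum to an arbitrary approximate solution. Given any feasible upgrade $x$ (cost $\le B$ in the budgeted version, cost $\le\alpha\cdot\mathrm{opt}$ in the targeted version), the cost cap together with the choice of $M$ forbids $x$ from upgrading any new edge. Next, the swap argument from Theorem~\ref{th:MMSTNP} applies verbatim: for any original edge $e=uv$, the endpoints $u$ and $v$ are joined through the attached clique by a path of weight-$1$ edges (which are never upgraded), so there is always an MST of the current $G'$ avoiding $e$; hence lowering $e$ from any weight $\ge 2$ down to weight $1$ changes neither the weight of that MST nor the global MST weight. So we may replace $x$ by $x'$ that upgrades a set $F'$ of original edges by exactly one unit each, with $c(x')\le c(x)$ and with resulting MST weight no smaller than that under $x$. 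By the theorem's analysis $\mathrm{coverage}(F',G)$ equals the MST increase achieved by $x'$, so $F'$ is a $k$-cut / Max-Components solution whose parameters are at least as good as those of $x$.

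With the cleanup in hand the two implications are bookkeeping. For the targeted version, take the target to be $T_0+(k-1)$: the forward direction (upgrade an optimal $k$-cut by one unit each) gives $\mathrm{opt}_{\mathrm{MMST}}\le \mathrm{opt}_{k\text{-cut}}$, and the cleanup gives $\mathrm{opt}_{k\text{-cut}}\le \mathrm{opt}_{\mathrm{MMST}}$, so the two optima coincide; running an $\alpha$-approximation for Targeted Discrete MMST and cleaning up produces a $k$-cut of weight $\le c(x')\le c(x)\le \alpha\cdot\mathrm{opt}_{\mathrm{MMST}}=\alpha\cdot\mathrm{opt}_{k\text{-cut}}$. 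For the budgeted version with budget $B$, the same correspondence shows that a feasible solution increasing the MST by $\Delta$ yields, after cleanup, a set $F'$ of weight $\le B$ for which $G\setminus F'$ has $\Delta+\#\mathrm{comp}(G)$ components, and vice versa, whence $\mathrm{opt}_{\mathrm{MaxComp}}=\mathrm{opt}_{\mathrm{MMST}}+\#\mathrm{comp}(G)$. An $\alpha$-approximation for Budgeted Discrete MMST then yields a Max Components solution with at least $\alpha\cdot\mathrm{opt}_{\mathrm{MMST}}+\#\mathrm{comp}(G)=\alpha\,(\mathrm{opt}_{\mathrm{MaxComp}}-\#\mathrm{comp}(G))+\#\mathrm{comp}(G)\ge \alpha\cdot\mathrm{opt}_{\mathrm{MaxComp}}$ components, the last step using $\alpha\le 1$ for a maximization ratio.

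The main obstacle is exactly the robustness of the cleanup: in Theorem~\ref{th:MMSTNP} the facts ``no new edge is upgraded'' and ``no original edge is upgraded twice'' are derived from \emph{optimality}, whereas here they must hold for any solution returned by an approximation algorithm. This is why we inflate the new-edge costs (ruling out the first failure mode outright) and why we must double-check that the swap argument uses only that the relevant clique edges sit at weight $1$ --- which stays true for a sub-optimal solution since those edges are never touched. The remaining care is the additive offset between ``MST increase'' and ``number of components,'' which is absorbed precisely because maximization ratios satisfy $\alpha\le 1$; no such offset appears in the targeted/$k$-cut case, where cost maps across the reduction identically.
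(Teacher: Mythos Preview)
Your argument is correct, but it takes a somewhat different route from the paper's. The paper keeps the construction of Theorem~\ref{th:MMSTNP} \emph{unchanged} (unit upgrade costs everywhere) and observes that the two cleanup claims in that proof---``no new edge is upgraded'' and ``no original edge is upgraded more than once''---were in fact established for \emph{any} feasible solution whose total cost is small relative to the attached clique (the optimality was only used for the contradiction punchline). Since the relevant budgets never exceed $|E(G)|<n^2$ while each clique has $n^2+2$ vertices, removing all upgraded edges from a clique still leaves it connected, and the swap argument goes through for an arbitrary approximate solution.

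You instead modify the instance: you inflate the cost of every new edge to a value $M$ that no feasible (budgeted) or $\alpha$-approximate (targeted) solution can afford, which rules out the first failure mode by fiat, and then you run the cycle/swap argument for the second. This is perfectly valid and has two side benefits: it cleanly avoids tracking the ``budget vs.\ clique size'' inequality, and your choice of original-edge costs $c(e)=w_G(e)$ extends the statement to weighted $k$-cut and weighted Max Components. The trade-off is that the paper's version yields the slightly stronger conclusion that the hardness (and the approximation-preserving reduction) already holds for MMST with \emph{unit} upgrade costs; your reduction produces non-unit costs on the auxiliary edges. For the corollary as stated, either approach suffices. Your handling of the additive offset $\#\mathrm{comp}(G)$ via $\alpha\le 1$ is also fine and is a detail the paper leaves implicit.
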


\begin{proof}
It follows from the proof of Theorem \ref{th:MMSTNP} that given the auxiliary graph $G'$, any feasible solution that spends budget $0\le b\le n$ to upgrade a set $F'$ and increases the MST by $T_b$ can be transformed into a feasible solution that spends budget at most $b$, only upgrades original edges by at most once, does not upgrade any newly added edges and increases the MST by at least $T_b$. Then, it follows that an $\alpha$-approximation to MMST also produces an $\alpha$-approximation to the Max Component problem. Similarly, an $\alpha$-approximation to the Targeted Discrete MMST also provides an $\alpha$ approximation to the Minimum $k$-Cut Problem. 

\end{proof}
\paragraph{Hardness of Approximation.}
Let $G$ be a $d$-regular undirected unweighted graph. The edge expansion $\phi(S)$ of $S \subset V(G)$ is defined as :
$$\phi(S)= \frac{|E(S,V(G) \setminus S)|}{d \min\{|S|,|V \setminus S|\}}$$
where $E(S, V(G) \setminus S)$ is the set of edges across the partition $(S, V(G) \setminus S)$. 

\begin{problem}[Small Set Expansion (SSE)]

Given a regular graph $G $, let $\delta, \eta\in (0, 1)$. The $SSE(\delta, \eta)$ problem is to distinguish between: 

\begin{enumerate}
    \item \textbf{(Completeness)} There exists $S \subset V$ of size $\delta|V|$ such that $\phi(S) \le \eta$.
    \item \textbf{(Soundness)} For every $S \subset V$ of size $\delta |V|$, $\phi(S) \ge 1-\eta$.
\end{enumerate}
\end{problem}

The author in \cite{Manu18} showed that assuming SSE is hard given any $\eta$ for some $\delta$, which is called the SSE hypothesis, or SSEH, $k$-cut is also hard to approximate to a $2-\varepsilon$ factor. We then provide a similar result for the Max Component Problem.

\begin{theorem}
Assuming SSEH, it is also hard to approximate Maximum Components to within $(\frac{1}{2} + \varepsilon)$ factor of the optimum for every constant $\varepsilon > 0$.
\end{theorem}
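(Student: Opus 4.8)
The plan is to reduce the Minimum $k$-Cut problem to the Maximum Components problem in an approximation-preserving way, and then invoke Manurangsi's result \cite{Manu18} that, under SSEH, $k$-cut is hard to approximate within $2-\varepsilon$. The key observation is that $k$-cut and Max Components are ``dual'' in the same sense as the targeted and budgeted versions of MMST: given a graph $G$ with edge weights, deleting a minimum-weight set of edges to create at least $k$ components (the $k$-cut objective) is complementary to deleting a set of edges of total weight at most $B$ to maximize the number of resulting components (the Max Components objective). I would make this precise as follows: let $\mathrm{OPT}_{kc}(G,k)$ denote the optimal $k$-cut value and $\mathrm{OPT}_{mc}(G,B)$ the optimal Max Components value. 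Then for any graph $G$, if there is a $\beta$-approximation for Max Components (i.e.\ an algorithm returning at least $\beta\cdot\mathrm{OPT}_{mc}(G,B)$ components within budget $B$), we can binary-search over $B\in[0, W]$ (where $W=\sum_e w(e)$, or over the polynomially many distinct values of cut weights if weights are polynomially bounded, as in the SSEH-hardness instances) to find the smallest $B$ for which the approximation algorithm already yields at least $k$ components; this $B$ is then an upper bound on $\mathrm{OPT}_{kc}(G, \lceil \beta(k-1)\rceil+1)$ or a similarly shifted value, giving a $k$-cut approximation whose factor degrades from $\beta$ in a controlled way.

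The main steps, in order, would be: (1) State and prove the exact complementary relationship between $k$-cut and Max Components — specifically, that for budget $B$ equal to the weight of an optimal $k$-cut, the optimal Max Components value is at least $k$, and conversely if $B$ suffices to create $\geq k$ components then the optimal $k$-cut has weight $\leq B$. (2) Show that a $(\tfrac12+\varepsilon)$-approximation for Max Components, fed budget equal to (a guess of) the optimal $k$-cut weight, produces at least $(\tfrac12+\varepsilon)k$ components, hence a cut into at least roughly $k/2$ parts at cost at most $\mathrm{OPT}_{kc}$. (3) Translate this into a contradiction with the $(2-\varepsilon')$-inapproximability of $k$-cut: producing $\Omega(k)$ components at the cost of an optimal $k$-cut is, by a standard merging/uncrossing argument, enough to beat the factor $2$ for $k$-cut (since a cut with $ck$ components for $c$ bounded below by a constant $>\tfrac12$ can be pruned to a $k$-cut of no larger weight, and its weight is at most the optimal $k$-cut weight, so the ratio to the true $k'$-cut optimum for the appropriate $k'$ is below $2$). (4) Finally quantify the $\varepsilon$-to-$\varepsilon'$ dependence so the statement holds for every constant $\varepsilon>0$.

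The main obstacle I expect is step (3): turning ``many components at low cost'' into a genuine violation of the $2-\varepsilon$ bound for $k$-cut requires care about \emph{which} value of $k$ the inapproximability is stated for, since Max Components returns a possibly fractional-looking $(\tfrac12+\varepsilon)k$ components and $k$-cut hardness is for a fixed integer parameter. The cleanest route is probably to run the argument in the contrapositive directly on the SSE instances used in \cite{Manu18}: there, the completeness case gives a small sparse set yielding $k$ cheap components and the soundness case forces any $k$-partition to be expensive, and one shows a good Max Components algorithm would distinguish these cases. I would also need to confirm that the reduction is truly approximation-preserving in the normalization used for Max Components (counting \emph{additional} components versus total components) — an off-by-one in the component count, analogous to the $\frac{opt}{2}-1$ additive slack in Theorem~\ref{GreedyAlgo}, must be absorbed into the constant $\varepsilon$, which is harmless since we only claim hardness for constant $\varepsilon>0$.
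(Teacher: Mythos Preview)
Your primary route --- a black-box reduction from the $(2-\varepsilon)$-hardness of $k$-cut --- has a real gap at step~(3). The pruning direction is backwards: a $(\tfrac12+\varepsilon)$-approximation for Max Components run at budget $B=\mathrm{OPT}_{kc}(G,k)$ is only guaranteed to return about $(\tfrac12+\varepsilon)k$ components, i.e.\ \emph{fewer} than $k$, and merging parts can only decrease, not increase, the number of components. What you actually obtain is a $k'$-cut with $k'\approx k/2$ of weight at most $\mathrm{OPT}_{kc}(G,k)$; since $\mathrm{OPT}_{kc}(G,k')\le \mathrm{OPT}_{kc}(G,k)$ holds trivially for $k'<k$, the ratio of your cut to the relevant optimum $\mathrm{OPT}_{kc}(G,k')$ is unbounded in general, and no contradiction with the $(2-\varepsilon)$ hardness (stated for a single fixed $k$) follows. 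There is no ``standard merging/uncrossing argument'' that repairs this.

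Your fallback --- going directly to the SSE instances --- is the correct move, and it is exactly what the paper does; it does not pass through Manurangsi's $k$-cut result as a black box. Given a $d$-regular SSE instance with parameters $(\delta,\eta)$, the paper sets $B=(\tfrac12+\eta)\delta d n$. In the completeness case the small non-expanding set $S$ of size $\delta n$ yields a partition into $\delta n+1$ parts (isolate each vertex of $S$) of cost at most $B$. The soundness analysis is where the real work lies: one must show that \emph{any} partition into more than $(\tfrac12+\tfrac{\varepsilon}{2})\delta n$ parts costs more than $B$. The paper does this by sorting the parts by size, accumulating the smallest ones into a set $A$ of size just below $\delta n$, padding $A$ to size exactly $\delta n$, and applying the expansion lower bound; a separate counting argument (using that the remaining parts are all large) forces $|A|\ge(\tfrac12+2\eta)\delta n$, which is needed to control the padding loss. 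The resulting gap between $\delta n+1$ and $(\tfrac12+\tfrac{\varepsilon}{2})\delta n$ exceeds $\tfrac12+\varepsilon$, so a $(\tfrac12+\varepsilon)$-approximation would distinguish the two cases. You should expect to reproduce this soundness argument rather than appeal to $k$-cut hardness.
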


\begin{proof}
We reduce SSE to Max Components. Given an instance of $SSE(\delta,\eta)$ on a $d$-regular graph $G$ with sufficiently large $n=|V(G)|$, let $\varepsilon>0$ be a value dependent on $n, \delta$ and $\eta$. We specify later their relationship. 
We will show that if $B = (\frac{1}{2}+\eta)\delta d n$, then solving Maximum Components with approximation ratio $(\frac{1}{2} + \varepsilon)$ on $G$ with budget $B$ is sufficient to solve this instance of $SSE$. First, we make the following observations about Completeness and Soundness. 

\textbf{\em(Completeness)} \quad If there exists $S\subset V$ of size at most $\delta n$ such that $\phi(S) \leq \eta$, then consider partition the graph into $|S| +1$ groups where the first group is $V \setminus S$ and each of the remaining groups contains a single vertex from $S$. The edges between the partitions are those in $E(S, V \setminus S)$ and the edges within the set $S$. There are $d|S|\phi(S) \leq \eta d |S|$ edges of the former type and only at most $d |S|/2$ of the latter. Hence, the number of edges between this partition is at most $(1/2 + \eta)d|S| \le (1/2 + \eta)d \delta n$. This implies there exists a $k'$-cut with at most $B$ edges where $k' = \delta n +1$.

\textbf{\em(Soundness)} \quad Suppose that for every $S \subset V$ of size $\delta n$, $\phi(S) \geq 1-\eta$. Let $k >\delta n(\frac{1}{2} + \frac{\varepsilon}{2})$ and $T_1,...,T_k \subset V$ be any $k$-partition of the graph. Assume without loss of generality that $|T_1|\leq ... \leq |T_k|$. Let $A = T_1 \cup ... \cup T_i$ where $i$ is the maximum index such that $|T_1 \cup  ... \cup T_i| \leq \delta n$.

We can then add up to $x\delta n$ nodes to $A$, where $(1-x)\delta n$ is the size of $A$, to obtain a set $A'$ of size exactly $\delta n$. Note that, since  $|A'| = \delta n$, $\phi(A')\geq 1 - \eta$, and so $E(A',V\setminus A') \geq (1-\eta)d |A'| =(1-\eta)d \delta n $.
This implies that $|E(A,V\setminus A)| \geq (1-\eta)d \delta n - x\delta n d$. We would like to prove that we always have $(1-\eta)d \delta n - x \delta n d \geq B = (\frac{1}{2}+\eta)\delta d n$. That is, that $x \leq  (1-\eta) - (\frac{1}{2}+\eta) = \frac{1}{2} - 2\eta$. Thus we wish to prove that $|A| \geq (\frac{1}{2}+2\eta)\delta n$.

Suppose for the sake of contradiction that $|A| < (\frac{1}{2}+2\eta)\delta n$.

Since $|A \cup T_{i+1}| > \delta n$, we have $|T_{i+1}| > (\frac{1}{2} - 2 \eta) \delta n$. Furthermore, since $A= T_1 \cup ... \cup T_i$, $i \leq |A|<(\frac{1}{2}+2\eta)\delta n$. 

Then,\begin{equation*}
\begin{split}
n & = |T_1 \cup ... \cup T_k|  \\
& > |T_{i+1} \cup ... \cup T_k | \geq (k-i) \cdot |T_{i+1}| \\
& >  ((\frac{1}{2} + \frac{\epsilon}{2})\delta n -(\frac{1}{2} + 2\eta)\delta n )((\frac{1}{2}-2\eta)\delta n) = (\frac{\epsilon}{2}-2\eta) \delta^2 n^2 (\frac{1}{2}-2\eta)
\end{split}
\end{equation*}

However, if $\eta < \min(\frac{\epsilon}{4},1/4)$, then this is positive and tends towards infinity as $n$ goes to infinity. Thus for sufficiently large $n$, this inequality is false, a contradiction. This proves that indeed, $|A| \geq (\frac{1}{2}+2\eta)\delta n$.

Thus this means that $|E(A,V\setminus A)| \geq B$, which means that any $k$-cut has at least $B$ edges when $k> \delta n(\frac{1}{2} + \frac{\varepsilon}{2})$. \\

In other words, if we are in the completeness case, then there exists a $B$ cost cut with  $\delta n +1$ components, whereas in the soundness case, all $B$ cost cuts give at most $\delta n(\frac{1}{2} + \frac{\varepsilon}{2})$ components. The gap between the two is more than $\frac{1}{2} + \varepsilon$ and so an approximation of Maximum Components to within $(\frac{1}{2} + \varepsilon)$ can distinguish between the two cases. This concludes the proof.
\end{proof}

\begin{corollary}
Assuming SSEH, Budgeted Discrete MMST is NP-hard to approximate to a $\frac{1}{2} + \varepsilon$-factor for $\varepsilon > 0$. \\
Assuming SSEH, Targeted Discrete MMST is NP-hard to approximate to a $2 - \varepsilon$-factor for $\varepsilon > 0$.
\end{corollary}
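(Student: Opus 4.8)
The plan is to read off both statements by composing the approximation-preserving reductions already in hand. From the corollary following Theorem~\ref{th:MMSTNP} we have two implications: an $\alpha$-approximation for Budgeted Discrete MMST yields an $\alpha$-approximation for Maximum Components, and an $\alpha$-approximation for Targeted Discrete MMST yields an $\alpha$-approximation for minimum $k$-cut. I would combine the first implication with the Maximum Components hardness established just above, and the second implication with the $(2-\varepsilon)$-inapproximability of $k$-cut under SSEH from~\cite{Manu18}.

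Concretely, for the budgeted case I would argue by contradiction: if for some fixed $\varepsilon>0$ there were a polynomial-time $(\tfrac12+\varepsilon)$-approximation for Budgeted Discrete MMST, then running it on the auxiliary graph $G'$ constructed from a Maximum Components instance and post-processing the returned upgrading scheme exactly as in the proof of the earlier corollary (discarding upgrades of newly added clique edges and collapsing repeated upgrades of an original edge, operations that never decrease the MST-weight increase) would produce a $(\tfrac12+\varepsilon)$-approximation for Maximum Components, contradicting the SSEH-hardness theorem above. The targeted case is symmetric: a $(2-\varepsilon)$-approximation for Targeted Discrete MMST, run on $G'$ with the target chosen so that reaching it forces an MST-weight increase of $k-1$, would yield via the same post-processing a $(2-\varepsilon)$-approximate minimum $k$-cut, contradicting~\cite{Manu18}.

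The only point requiring care is keeping straight which problems are maximization and which are minimization, so that the factor ``$\tfrac12+\varepsilon$'' (below $1$, for the maximization pair Budgeted MMST / Maximum Components) and the factor ``$2-\varepsilon$'' (above $1$, for the minimization pair Targeted MMST / $k$-cut) line up with the directions of the reductions. The additive offset between $coverage(\cdot,G)$ and the number of components (respectively between $k$ and $k-1$) is a fixed constant that is already absorbed inside the reduction of the earlier corollary, so it does not erode the multiplicative ratio; once this bookkeeping is checked, the corollary follows immediately.
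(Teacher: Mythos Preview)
Your proposal is correct and follows exactly the route the paper intends: the corollary is stated without proof in the paper because it is immediate from combining the earlier approximation-preserving corollary (an $\alpha$-approximation for Budgeted/Targeted MMST gives an $\alpha$-approximation for Max Components/$k$-cut) with, respectively, the just-proved $(\tfrac12+\varepsilon)$-hardness of Max Components and the $(2-\varepsilon)$-hardness of $k$-cut from~\cite{Manu18}. Your remark about the additive $k$ versus $k{-}1$ offset being absorbed is a nice extra check that the paper itself leaves implicit.
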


\section{MST Fortification with all edges starting with the Same Weight}
\label{sec:samestart}
In the previous section, we have shown that the unit MST upgrading problem is hard even if all the costs are unitary. However, it is unknown if the problem remains hard if all initial weights start off the same. Thus, consider the special targeted version of unit MST upgrading where all the initial weights of the graph start with the same weight. Without loss of generality, assume the weights all start with $0$. First we provide a polytime algorithm that solves this problem with a fixed target MST value of $T$. For this section, we drop the term $G$ and use $coverage(F)$ to denote the coverage of set $F$ in $G$ with initial weights of zero.



\begin{theorem}
\label{boundedT}
Let $G$ be a graph whose edges have weight $0$ and upgrading costs of $c_e$. Given a fixed target MST value of $T$, there exists an algorithm that runs in $n^{O(T)}$ time and upgrades edges to produce a final MST value of $T$ while minimizing the total cost of upgrading.
\end{theorem}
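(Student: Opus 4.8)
The plan is to show that, once all edge weights start at $0$, the problem collapses to a bounded number of minimum $k$-cut computations. Assume $G$ is connected (otherwise work component by component) and recall from the observation at the end of Section~\ref{sec:correct} that here $coverage(F)$ is simply the number of extra connected components created by deleting $F\subseteq E$; equivalently $coverage(F)=r(E)-r(E\setminus F)$, where $r$ is the rank function of the graphic matroid of $G$.

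First I would pin down the structure of an upgrade scheme $x:E\to\mathbb{Z}_{\ge 0}$. The standard layer-cake identity for integer-weighted spanning trees gives that the MST weight of $(G,x)$ equals $\sum_{j\ge 1}coverage(F_j)$, where $F_j:=\{e:x_e\ge j\}$. Two normalizations then apply: (i) we may assume the MST weight is exactly $T$, since any overshoot can be removed by decrementing an edge with $x_e\ge 1$, an operation that never raises the cost and lowers the MST weight by at most one, so after finitely many steps it lands on $T$; and (ii) after (i) we may assume $x_e\le T$ for all $e$, because if $x_e>T$ then the nonincreasing nonnegative sequence $coverage(F_1)\ge\cdots\ge coverage(F_{x_e})$ sums to only $T<x_e$, forcing $coverage(F_{x_e})=0$, and decrementing $x_e$ leaves the MST weight unchanged while lowering the cost. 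Hence every relevant scheme corresponds to a nested family $F_1\supseteq\cdots\supseteq F_T$ of edge sets, with $x_e=|\{j:e\in F_j\}|$, cost $\sum_{j=1}^T c(F_j)$, and MST weight $\sum_{j=1}^T coverage(F_j)$.

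Next I would record the supermodularity property advertised in the introduction: $coverage:2^E\to\mathbb{Z}_{\ge 0}$ is monotone, vanishes at $\emptyset$, and is supermodular; the last point follows from $coverage(F)=r(E)-r(E\setminus F)$ and submodularity of $r$, since $r(A)+r(B)\ge r(A\cup B)+r(A\cap B)$ with $A=E\setminus F_1$, $B=E\setminus F_2$ rearranges exactly to $coverage(F_1)+coverage(F_2)\le coverage(F_1\cup F_2)+coverage(F_1\cap F_2)$. Let $h(r)$ be the minimum cost of an edge set whose deletion leaves at least $r+1$ components (so $h(0)=0$ and $h$ is nondecreasing); this is precisely a minimum-cost $(r+1)$-cut of $(G,c)$. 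I then claim
\[
  opt=\min\Bigl\{\,\textstyle\sum_{j=1}^{T}h(r_j)\ :\ r_j\in\mathbb{Z}_{\ge 0},\ \textstyle\sum_{j=1}^{T}r_j\ge T\,\Bigr\}.
\]
The lower bound is immediate from the previous paragraph: an optimal scheme yields sets $F_j$ of coverage $r_j$ with $\sum_j r_j\ge T$ and cost $\sum_j c(F_j)\ge\sum_j h(r_j)$. For the upper bound I would take a profile $(r_j^{\ast})$ attaining the right-hand side, choose a minimum $(r_j^{\ast}+1)$-cut $F_j$ for each $j$, and \emph{uncross} the resulting (possibly incompatible) family: while two members cross, replace $(F_a,F_b)$ by $(F_a\cup F_b,\,F_a\cap F_b)$, and while two nonempty members are disjoint, replace $(F_a,F_b)$ by $(F_a\cup F_b,\,\emptyset)$. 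Because $c$ is modular the total cost is preserved, by supermodularity the total coverage never decreases, and $\sum_j |F_j|^2$ strictly increases at each step, so the process terminates with a nested family of cost $\sum_j h(r_j^{\ast})$ and total coverage $\ge T$; reading off $x$ from it (and trimming via (i)) gives a feasible scheme of that cost.

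Given this, the algorithm then reads: compute $h(0),\dots,h(T)$ with any $n^{O(k)}$-time minimum $k$-cut routine (Goldschmidt--Hochbaum, or Thorup's deterministic algorithm), at total cost $\sum_{r=0}^{T} n^{O(r)}=n^{O(T)}$; find an optimal profile by a trivial $O(T^2)$-state dynamic program over (layers used, coverage still owed); and run the uncrossing to recover the nested family, hence the optimal scheme, trimmed to MST weight exactly $T$. I expect the one genuinely delicate point to be the upper bound above: the $T$ cheapest cuts of the prescribed orders need not be nested, and therefore a priori correspond to no single integral scheme, so the whole reduction hinges on uncrossing repairing this \emph{at no extra cost} — which works precisely because cost is modular while coverage is supermodular, so every union/intersection swap (and every merge of disjoint members, which is also needed to finish at a chain rather than merely a laminar family) helps both the cost and the coverage. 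The normalizations in the first step are routine but need a little care, chiefly to see that the decrementing procedures terminate at the right value.
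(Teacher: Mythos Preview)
Your proof is correct and follows the same architecture as the paper's: decompose an integral scheme into its level sets, observe that the MST increase is the sum of their coverages (evaluated in the all-zero graph), reduce to choosing a coverage profile for at most $T$ layers via minimum $k$-cut computations, and then use supermodularity together with union--intersection uncrossing to realize any such profile by a nested family at no extra cost. The only differences are cosmetic: you derive supermodularity in one line from submodularity of the graphic-matroid rank (the paper proves it by a direct component-counting argument), and you phrase the final optimization as a small DP over profiles $(r_1,\dots,r_T)$ rather than routing through an Unbounded Knapsack instance as the paper does.
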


Note that since all the costs are integral, if it takes budget $B$ to raise the MST by $T$, then $B\ge T$. Thus, given a bounded constant budget $B$, the special budgeted version of unit MST upgrading can also be solved exactly in polytime.

To prove the theorem, we require a supermodular property of coverage. 
\begin{lemma}{(Supermodularity of coverage)}

\label{supmod}
    For any $F, F'\subseteq E(G)$,  $coverage(F)+coverage(F')\le coverage(F\cup F')+coverage (F\cap F')$. 
\end{lemma}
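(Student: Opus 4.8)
The plan is to reduce the statement to the standard submodularity of the rank function of the graphic matroid of $G$. Since in this section every edge of $G$ starts with weight $0$, every spanning tree of $G$ is a minimum spanning tree, so $coverage(F)$ is just the least number of edges of $F$ used by a spanning tree. Equivalently, invoking the observation made earlier with $w_i = 0$ (so that $\widetilde{G_{w_0}} = G$, since there are no edges of weight strictly above or strictly below $0$ to delete or contract), $coverage(F)$ equals the increase in the number of connected components when $F$ is deleted from $G$. Writing $\kappa(H)$ for the number of connected components of a graph $H$ and noting $\kappa(G) = 1$, this gives $coverage(F) = \kappa(G - F) - 1$.

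Next I would express $\kappa(G - F)$ through the graphic matroid rank function $r$ of $G$, where $r(A)$ is the size of a maximum forest using only edges of $A$: for every $A \subseteq E(G)$ one has $\kappa(V, A) = n - r(A)$, hence
\[
coverage(F) = n - 1 - r(E \setminus F).
\]
Substituting this into both sides of the desired inequality, the additive constants cancel and the claim becomes
\[
r\bigl(E \setminus (F \cup F')\bigr) + r\bigl(E \setminus (F \cap F')\bigr) \;\le\; r(E \setminus F) + r(E \setminus F').
\]
Setting $A := E \setminus F$ and $B := E \setminus F'$ and using De Morgan's laws, $E \setminus (F \cup F') = A \cap B$ and $E \setminus (F \cap F') = A \cup B$, so the claim is exactly
\[
r(A \cap B) + r(A \cup B) \;\le\; r(A) + r(B),
\]
which is the (textbook) submodularity of a matroid rank function.

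The only step needing an argument is the identity $coverage(F) = n - 1 - r(E \setminus F)$ (equivalently $coverage(F) = \kappa(G - F) - 1$): for any spanning tree $T$, the edges $T \setminus F$ form a forest inside $E \setminus F$, so $|T \setminus F| \le r(E \setminus F)$ and therefore $|T \cap F| = (n - 1) - |T \setminus F| \ge (n - 1) - r(E \setminus F)$; conversely a maximum forest of $E \setminus F$ has $r(E \setminus F)$ edges and extends to a spanning tree by adding exactly $(n - 1) - r(E \setminus F)$ edges of $F$, so the bound is attained. I expect this matroid‑completion argument, together with keeping track of which complemented sets appear, to be the only place where any care is needed; the supermodularity of $coverage$ then follows immediately. (If a fully self‑contained proof is preferred, one can instead verify $r(A) + r(B) \ge r(A \cup B) + r(A \cap B)$ directly for the graphic matroid by an elementary counting of components.)
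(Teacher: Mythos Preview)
Your argument is correct. Reducing $coverage(F)$ to $(n-1) - r(E\setminus F)$ via the graphic matroid rank and then invoking the textbook submodularity of $r$ is a clean and entirely valid route; the verification of the identity $coverage(F) = (n-1) - r(E\setminus F)$ that you sketch (upper bound from the forest inequality, tightness by extending a maximum forest of $E\setminus F$) is exactly what is needed, and De~Morgan's laws take care of the complement bookkeeping.

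The paper proceeds differently: it does not invoke matroid rank at all but gives a direct combinatorial count. After deleting $F\cap F'$ to obtain components $G_1,\dots,G_k$, it lets $F_i,F_i'$ be the (now disjoint) remnants of $F,F'$ inside $G_i$, sets $a_i,a_i',b_i$ to be the additional components created by deleting $F_i$, $F_i'$, $F_i\cup F_i'$ from $G_i$, and observes that any spanning tree of $G_i$ must contain at least $a_i$ edges of $F_i$ and $a_i'$ edges of $F_i'$, hence $b_i\ge a_i+a_i'$; summing gives the inequality. (The paper also carries around a decomposition by weight classes, which in this section is vacuous since all weights are zero.) Your approach is shorter and makes the connection to a standard structural fact explicit, and it would generalize verbatim to any matroid; the paper's approach is self-contained and avoids citing the submodularity of rank, which may be preferable if one does not want to assume matroid background.
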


\begin{proof}\label{supmod}
    First we decompose our sets by edge weight $F = \cup F_{w_i}$ and $F = \cup F'_{w_i}$ where $F_{w_i}$ and $F'_{w_i}$ are the set of edges with weight $w_i$ that are also in $F$ and $F'$ respectively. Then 
    $$coverage(F)+coverage(F') = coverage(\cup F_{w_i})+coverage(\cup F'_{w_i})$$
    $$= \sum_i (coverage(F_{w_i})+coverage(F'_{w_i}))$$
    and
    $$coverage(F\cup F')+coverage (F\cap F') = coverage(\cup_i (F_{w_i} \cup F'_{w_i})) + coverage(\cup_{i,j} (F_{w_i} \cap F'_{w_j})) $$
       $$=  coverage(\cup_i (F_{w_i} \cup F'_{w_i})) + coverage(\cup_{i} (F_{w_i} \cap F'_{w_i})) $$
       $$= \sum_i (coverage(F_{w_i} \cup F'_{w_i}) + coverage(F_{w_i} \cap F'_{w_i})) $$
    Note that the second equality follows from the fact that $F_{w_i} \cap F'_{w_j} = \emptyset$ for $i \neq j$.
    It now suffices to show that for $F$ and $F'$ of a single weight class, the inequality holds. 
    
    In consequence, we assume without loss of generality that $F$ and $F'$ both contain only edges of weight $w_i$.
    The coverage of any edge-set is the same as the number of additional components created from its deletion in $\widetilde{G_{w_i}}$. Let $\{G_i\}_{i=1}^k$ be the connected components of $\widetilde{G_{w_i}}\backslash (F\cap F')$. Let $F_i, F_i'$ be the set of edges in $G_i$ that belongs in $F, F'$ respectively. 
    Let $a_i, a_i', b_i$ represent the number of additional components created from $G_i$ by deleting $F_i, F_i', F_i\cup F_i'$ respectively. Note that $a_i, a_i', b_i$ also represent the coverage of $F_i, F_i', F_i\cup F_i'$ respectively in $G_i$. 
    Since $F_i\cap F_i'=\emptyset$ by construction, it follows that any spanning tree of $G_i$ must contain $a_i, a_i'$ edges from $F_i, F_i'$ respectively.
    Thus $b_i\ge a_i+a_i'$. In $\widetilde{G_{w_i}}$, the coverage of $F$ can be viewed as first deleting $F\cap F'$ and then deleting the $F_i$'s in sequence and counting how many additional components it creates. 
    In other words $coverage(F)=k-1+\sum_{i = 1}^k a_i$. We can obtain similar equations for $coverage(F'), coverage(F\cup F')$. 
    
    Then, 
    \begin{equation*}
    \begin{split}
    coverage(F)&+coverage(F') =(k-1+\sum_ia_i)+(k-1+\sum_i a_i') \\
    & \le (k-1)+(k-1+\sum b_i) =coverage(F\cap F')+coverage(F\cup F')
    \end{split}
    \end{equation*}
\end{proof}

To prove Theorem \ref{boundedT}, we also need to solve the following variant of the knapsack problem:

\begin{problem} [Unbounded Knapsack Problem]
For $i\in [n]$, let $w_i$ and $p_i$ be respectively the weight and the profit of item $i$. Given fixed capacity $W$, find $x_i\in \mathbb{Z}_{\ge 0}$ that maximizes $\sum_{i=1}^n p_ix_i$ subject to $\sum_{i=1}^n w_ix_i\le W$. 
\end{problem}

\begin{lemma}
\label{unboundknap}
Given an instance of the Unbounded Knapsack Problem, if each profit $p_i$ is of order $O(n)$, then a solution can be found in time $O(n^2p^2)$ where $p=\max_i p_i$. 
\end{lemma}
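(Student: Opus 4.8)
The plan is to convert this unbounded knapsack instance into a bounded dynamic program by singling out the item $i^\ast$ with the largest profit-to-weight ratio $p_{i^\ast}/w_{i^\ast}$ (after first discarding any item with $w_i > W$, which can never be used, and assuming as we may that all $p_i,w_i \ge 1$). The engine of the argument is an exchange claim: there is an optimal solution $x$ with $x_j < p_{i^\ast}$ for every $j \neq i^\ast$. Indeed, if $x_j \ge p_{i^\ast}$ for some $j \neq i^\ast$, I would replace $p_{i^\ast}$ copies of item $j$ by $p_j$ copies of item $i^\ast$: the profit is unchanged (both sides equal $p_{i^\ast}p_j$), and the total weight does not increase because $p_j w_{i^\ast} \le p_{i^\ast} w_j$ by the extremal choice of $i^\ast$. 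Each such swap strictly decreases $\sum_{j \neq i^\ast} x_j$, so after finitely many swaps the claim holds. Hence in this optimal solution the items other than $i^\ast$ contribute total profit $\sum_{j\neq i^\ast} x_j p_j < (n-1)p^2 =: P_{\max}$, where $p=\max_i p_i$.

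Next I would set up a profit-indexed dynamic program over the items $j\neq i^\ast$ only. Let $h(P)$ be the minimum total weight of a multiset of such items having total profit exactly $P$, for $P\in\{0,1,\dots,P_{\max}\}$ (and $h(P)=\infty$ if no such multiset exists); this is filled in by $h(0)=0$ and $h(P)=\min_{j\neq i^\ast,\ p_j\le P}\bigl(w_j+h(P-p_j)\bigr)$, which costs $O(n)$ per value of $P$ and thus $O(n\cdot P_{\max})=O(n^2p^2)$ in total. The algorithm then returns
\[
  \max\left\{\, P + \left\lfloor \frac{W-h(P)}{w_{i^\ast}} \right\rfloor p_{i^\ast} \;:\; 0 \le P \le P_{\max},\ h(P)\le W \,\right\},
\]
reading the $i^\ast$-term as $0$ if $i^\ast$ was discarded; this postprocessing is $O(P_{\max})$ and is dominated, so the overall running time is $O(n^2p^2)$ (which is $O(n^4)$ under the hypothesis $p_i=O(n)$).

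For correctness, one direction is immediate: each candidate value in the maximum is the profit of a genuine feasible solution, namely the multiset witnessing $h(P)$ together with $\lfloor(W-h(P))/w_{i^\ast}\rfloor$ copies of $i^\ast$, so the output is at most the optimum. For the converse I would take an optimal solution obeying the exchange claim and split its multiset into the $k$ copies of $i^\ast$ and the remainder; the remainder has some profit $P_0\le P_{\max}$ and weight $W_0$, so $h(P_0)\le W_0\le W-kw_{i^\ast}$, hence $\lfloor(W-h(P_0))/w_{i^\ast}\rfloor\ge k$ and the term for $P=P_0$ is at least $P_0+kp_{i^\ast}$, the optimal value. The only nontrivial ingredient — and the main obstacle — is the exchange claim and the resulting $O(np^2)$ cap on the profit attainable off the best-ratio item; once that is established, the DP and its analysis are a routine variant of the textbook knapsack recurrence.
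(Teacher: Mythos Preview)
Your proof is correct and follows essentially the same approach as the paper: the same exchange argument (swap $p_{i^\ast}$ copies of item $j$ for $p_j$ copies of $i^\ast$) to cap the profit contributed by non-best-ratio items at $O(np^2)$, followed by a profit-indexed DP of total cost $O(n^2p^2)$. The only difference is organizational---the paper splits into two cases depending on whether $P^\ast$ exceeds the cap and peels off copies of the best-ratio item to reduce to the DP, whereas you apply the exchange up front and fill in the best-ratio item greedily after the DP; your packaging is arguably cleaner but the content is the same.
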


A solution for this above lemma is proposed in \cite{stackex}; we formally reproduce the proof below for completeness.

\begin{proof}
    Given an instance the Unbounded Knapsack Problem, let $x^*$ be an optimal solution and $P^*$ be the optimal profit. Assume without loss of generality that $p_1/w_1 = \max_{i=1}^n \{p_i/w_i\}$. In other words, item $1$ gives the most bang-for-buck.  Let $P=\sum_{i=2}^n p_1p_i$. We break into two cases depending on how $P^*$ compares to $P$. 
    
    \paragraph{Case 1} Suppose $P^*\le P$. Then, consider the following dynamic programming. For $i\in [n], 1\le p\le P$, let $f(i, p)$ be the least possible weight of a solution such that it contains at least one copy of item $i$ and its profit is exactly $p$. For $i\in [n]$, let $f(i, p_i)=w_i$ and let $f(i, p) = \infty$ for all $p< p_i$. This corresponds to the fact that if a solution contains item $i$, its profit must be at least $p_i$. For $i\in[n], p_i<p\le P$, let $f(i, p) = \min_{j\in [n]} \{f(j, p-p_i)\} + w_i$. Intuitively, this recursion says that if a solution has profit $p$ and contains item $i$, then removing it results in a solution with profit $p-p_i$ and weight $f(i, p)-w_i$. Thus, searching for the best way to achieve profit $p-p_i$ also leads to a solution for $f(i, p)$. Since $i, p$ are polynomially bounded, all values of $f$ can be computed in polytime. For $1\le p\le P$, let $w(p)=\min_{i\in [n]} \{f(i, p)\}$. Then, simply find the largest value of $p$ such that $w(p)\le W$.
    
    \paragraph{Case 2} Suppose $P^*>P$. Without loss of generality, assume $x^*$ is an optimal solution that maximizes the value $x^*_1$. First we claim that $x^*_1\neq 0$. Since $\sum_{i=1}^n x_i^*p_i = P^* > P = \sum_{i=1}^n p_1p_i$, there exists $i\in [n]$ such that $x_i^*\ge p_1$. If $i=1$, then we are done. Otherwise, since $p_1/w_1\ge p_i/w_i$, it follows that by swapping out $p_1$ copies of item $i$ and replace it with $p_i$ copies of item $1$, the profit does not change and the weight does not increase. Since $x^*$ is an optimal solution that maximizes $x^*_1$, it follows that $x^*_1>0$.
    
    Then, it follows that removing a single copy of item $1$ from $x^*$ is an optimal solution for profit $P^*-p_1$. Therefore one can recursively remove $p_1$ from $P^*$ until the profit falls below $P$ and use the solutions from Case 1 to build $x^*$.
    
    In particular, consider $W'$ such that $W-w_1 <W' \le W$. Let $w'$ be the largest integer such that $W'\le w(P)$ and $W'=W-k'w_1$ for some integer $k'$. Note that $x^*$ has total weight $W'$ if and only if there exists $p'$ such that $w(p')=w'$. Then, let $\mathcal{W}$ be the set of weights $W'$ where such $p'$ exists for the associated $w'$. Since $x^*$ contains a copy of item $1$, its total weight must be one of the values in $\mathcal{W}$. For each $W'\in \mathcal{W}$, its associated solution must be $x'$ and an additional $k'$ copies of item $1$. Therefore, its associated profit is $p'+k'p_1$. Then, taking the solution with the largest profit $p;+k'p_1$ is an optimal solution. 
\end{proof}

\begin{proof}[Proof of Theorem \ref{boundedT}]
    Let $w^*$ be an optimal solution representing the final weights of every edge. Let $b^*$ be the upgrading cost to reach the final weights $w^*$. Let $E_i^*$ represent the set of edges whose final weight is at least $i$. Then, we can decompose $w^*$ into a sequence of lifts where we first lift $E_1^*$ by one unit, then $E_2^*$ and so on. Note that at the time of lifting $E_i^*$, the set $E_i^*$ has the highest weight among all edges of the graph. Then, it follows from Lemma \ref{lem:sumeq}, lifting $E_i^*$ at time $i$ increases the MST by exactly the same amount as the coverage of $E_i^*$ at the beginning where all weights are the same. Thus, it motivates us to only look at the coverage of sets at the initial stage where all weights are $0$. 
    
    Define $F_i$ to be a set of edges in $G$ with the least upgrading cost such that $G\backslash F_i$ creates $i$ additional components for $i\le T$. Note that finding these sets is equivalent to solving $T$ iterations of the Min $i$-Cut problem, which can be solved in $n^{O(T)}$ time. Let $c(F_i)$ represent the cost to upgrade the set $F_i$ by one unit. It follows that an optimal solution is a combination of these cuts that maximizes the sum of the coverage while ensuring the cost is within budget. 
    
    Then, consider the following instance of the Unbounded Knapsack Problem. Consider $T$ items $F_1, ..., F_T$ where $F_i$ has weight $c(F_i)$ and profit $i$. It follows from Lemma \ref{unboundknap} that a solution can be obtained in polynomial time. 
    

    
    Let $x^*$ be an optimal solution to the knapsack problem. Now, we slightly modify the knapsack solution so that the subsets form a chain. Suppose there exists $i, j$ where neither $F_i, F_j$ are subsets of the other one and their $x^*$ values are non-zero. Then, by Lemma \ref{supmod}, we can use $F_i\cap F_j, F_i\cup F_j$ instead since it does not change the total upgrading cost. Then, we can perform these uncrossing operations to obtain a family of edge-sets $F_k^*$ and integers $y_k^*$ such that $\sum_k y_k^* coverage(F_k^*) \ge \sum_i x_i^* coverage (F_i)$. It is worth mentioning that since there are at most $T$ edge-sets with non-zero $x^*, y^*$ values, it takes at most $T^2$ many applications of Lemma \ref{supmod} to obtain the sets $F_k^*$. Since $F_k^*$ forms a chain, we can lift them in sequence from the largest to the smallest by $y_k^*$ amount at a time. It follows from Lemma \ref{lem:sumeq} that every lift increases the MST by exactly $coverage(F_k^*)$. Thus, it follows that this process reaches target $T$ and minimizes the downgrading cost. 
\end{proof}

\begin{corollary}
For every constant $\varepsilon>0$, there exists a $(1/2-\varepsilon)$-approximation algorithm for MMST with uniform starting weights.
\end{corollary}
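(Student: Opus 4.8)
The plan is to combine the two algorithms already available for the uniform‑starting‑weight case. When $opt$ is large, the $\frac{opt}{2}-1$‑solution for Budgeted MMST of Theorem~\ref{GreedyAlgo} already achieves ratio close to $\frac12$; when $opt$ is small, the exact algorithm of Theorem~\ref{boundedT} is efficient. Fix $\varepsilon>0$, let $N=\lceil 1/\varepsilon\rceil$ (a constant), and recall that we may assume all initial weights are $0$, so the initial MST weight is $0$ and a ``target MST value $t$'' means ``increase the MST by exactly $t$''; note that $opt$, the optimal increase for budget $B$, is a nonnegative integer.

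The algorithm runs two subroutines and outputs whichever produces the larger MST increase. First, it runs the algorithm of Theorem~\ref{GreedyAlgo} on $(G,w,c,B)$, obtaining within budget $B$ a scheme with MST increase $I_1\ge \frac{opt}{2}-1$. Second, for each $t\in\{1,\dots,N\}$ it runs the algorithm of Theorem~\ref{boundedT} to compute the minimum cost $c_t$ of an upgrading scheme that raises the MST by exactly $t$ (set $c_0=0$); this takes $n^{O(N)}=n^{O(1/\varepsilon)}$ time, polynomial for fixed $\varepsilon$. It then sets $t^{\star}=\max\{t\le N:\ c_t\le B\}$ and keeps the associated scheme, which has cost $\le B$ and increase $t^{\star}$. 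The returned increase is $\max\{I_1,t^{\star}\}$.

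The crux is to show $t^{\star}=\min(opt,N)$. Let $x^{\star}$ be an optimal within‑budget scheme with MST increase exactly $opt$. The weight of a minimum spanning tree is $1$‑Lipschitz in each edge weight, since it is the minimum over spanning trees $T$ of $\sum_{e\in T}(w+x)(e)$ and each such sum changes by at most $1$ when one coordinate of $x$ changes by $1$. Hence, lowering the coordinates of $x^{\star}$ to $0$ one unit at a time, the MST increase decreases from $opt$ to $0$ in steps of size $0$ or $1$, so it equals $t$ at some intermediate vector $x\le x^{\star}$; this $x$ has cost $c\cdot x\le c\cdot x^{\star}\le B$ and increase exactly $t$, so $c_t\le B$ for every $t\le opt$. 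Conversely, if $c_t\le B$ for some $t>opt$ then $t$ would be achievable within budget $B$, contradicting maximality of $opt$; therefore $\{t\ge 0: c_t\le B\}=\{0,1,\dots,opt\}$ and $t^{\star}=\min(opt,N)$.

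Finally, if $opt\le N$ then $t^{\star}=opt$ and the output is in fact optimal. If $opt\ge N$ then $opt\ge 1/\varepsilon$, hence $I_1\ge \frac{opt}{2}-1\ge \frac{opt}{2}-\varepsilon\cdot opt=\bigl(\tfrac12-\varepsilon\bigr)opt$. In both cases the returned increase is at least $\bigl(\tfrac12-\varepsilon\bigr)opt$, which proves the corollary. There is no serious obstacle here, as both ingredients are already established; the one step requiring care is the Lipschitz/truncation argument above, which bridges the targeted formulation of Theorem~\ref{boundedT} and the budgeted formulation by certifying that every small increase up to $opt$ is attainable within budget $B$.
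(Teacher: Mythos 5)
Your proof is correct and follows essentially the same route as the paper: run the $\frac{opt}{2}-1$ budgeted algorithm of Theorem~\ref{GreedyAlgo} to handle large $opt$, and fall back on the exact $n^{O(T)}$ algorithm of Theorem~\ref{boundedT} for constant-size targets when $opt$ is small. The only difference is cosmetic: you always run both subroutines and take the better output, and you spell out (via the truncation argument showing $c_t\le B$ for all $t\le opt$) the bridge from the targeted Theorem~\ref{boundedT} to the budgeted problem, which the paper leaves implicit in the remark following that theorem.
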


\begin{proof}
    Let us fix $\varepsilon>0$.
    We now describe an algorithm that runs in polynomial time and that gives a $(1/2-\varepsilon)$-approximation algorithm for MMST with uniform starting weights.
    Let $(G,0,c,B)$ be an instance of MMST with 0 weight for all edges. Let $opt$ be the optimal increase for the MST in an optimal solution.
    We first run the greedy algorithm in Theorem \ref{GreedyAlgo} in polynomial time, giving us a $t$ increase in the MST weight, with $opt \ge t \ge \frac{opt}{2} - 1$.
    If $t \ge \frac{1}{\varepsilon}$, then $opt \ge \frac{1}{\varepsilon}$ and so $t \ge opt(\frac{1}{2} - \frac{1}{opt}) \ge opt(\frac{1}{2} - \varepsilon)$. If $t < \frac{1}{\varepsilon}$, then $\frac{1}{\varepsilon} > \frac{opt}{2} - 1$, and $opt<2+\frac{2}{\varepsilon}$ and so since $\varepsilon$ is fixed, by Theorem \ref{boundedT}, an optimal solution can be obtained in this case in time $n^{O(2+2/\varepsilon)}$ which is polynomial in $n$ for any fixed constant $\varepsilon$.
    
    Thus, in all cases, the algorithm runs in polynomial time and finds a $(1/2-\varepsilon)$-approximation.
\end{proof}

\section{Extension and Open Problem}

We expect our modifications to work in a straightforward manner for the extension of the main problems from MSTs (graphical matroid bases) to general matroid bases following the framework in~\cite{FS98}.
The main open problem from our work is to extend it to the case of directed graphs.

\newpage

\appendix


\end{document}